\documentclass[a4paper,10pt]{article}

%
\usepackage[T1]{fontenc}
\usepackage[utf8]{inputenc}

	\usepackage{amsmath,amsfonts,amsthm}

\usepackage{newtxtext,newtxmath}%

	\usepackage{graphicx}
	\usepackage{xspace}
	\usepackage{tikz}
	\usetikzlibrary{arrows,automata,shapes}
	\usepackage[small]{subfigure}
	\usepackage{paralist}
	\setlength{\pltopsep}{1ex}
	\usepackage{dsfont}
	\usepackage{bbm}
	\usepackage{xfrac}

	\usepackage[noend]{algpseudocode}
	\usepackage{algorithm}
\usepackage{authblk}
	

\newcommand{\bbI}{\mathbb{I}}
\def\bbR{\mathbb{R}} 
\def\bbN{\mathbb{N}} 

\newcommand{\Half}{\mathbbmss{t}}

\newcommand{\floor}[1]{{\lfloor {#1} \rfloor}}

\DeclareMathOperator{\Untime}{Untime}

\newcommand{\dims}[1]{k}















\newcommand{\trans}[1]{\xrightarrow{#1}}

\newcommand{\cut}[1]{ }

%

\theoremstyle{plain}
\newtheorem{theorem}{Theorem}
\newtheorem{lemma}{Lemma}
\newtheorem{corollary}{Corollary}

\theoremstyle{definition}
\newtheorem{definition}{Definition}

\begin{document}
	\title{Equivalence Checking and Intersection of Deterministic Timed Finite State Machines}

	\author[1]{Davide Bresolin}
	\author[2]{Khaled El-Fakih}
	\author[3]{Tiziano Villa}
	\author[4]{Nina Yevtushenko}

\date{}                     


\affil[1]{%
		Dipartimento di Matematica,  
		University of Padova, Italy. 
		\texttt{davide.bresolin@unipd.it}}
\affil[2]{%
		American University of Sharjah, 
		Sharjah, United Arab Emirates.
		\texttt{kelfakih@aus.edu}}
\affil[3]{%
		Dipartimento di Informatica,  
		University of Verona, Italy. 
		\texttt{tiziano.villa@univr.it}}
\affil[4]{%
		Ivannikov Institute for System Programming of the Russian Academy of Sciences \& National Research University, Higher School of Economics,
		Moscow, Russia. 
		\texttt{evtushenko@ispras.ru}}
	
	\renewcommand\Affilfont{\small}


	\date{}

	\maketitle

	\begin{abstract}
	There has been a growing interest in defining models of automata 
        enriched with time, such as finite automata extended with clocks
        (timed automata).
	In this paper, we study deterministic timed finite state machines 
        (TFSMs), i.e., finite state machines with a single clock, timed guards 
        and timeouts which transduce timed input words into timed output words.
	We solve the problem of equivalence checking by defining a bisimulation
        from timed FSMs to untimed ones and viceversa.
        Moreover, we apply these bisimulation relations to build the
        intersection of two timed finite state machines by untiming them,
        intersecting them and transforming back to the timed intersection.
	\end{abstract}

\section{Introduction}
\label{sec:intro}

Finite automata (FA) and finite state machines (FSMs) are formal models 
widely used in the practice of engineering and science, e.g.,
in application domains ranging from sequential circuits, communication 
protocols, embedded and reactive systems, to biological modelling. 

Since the 90s, the standard classes of FA have been enriched with 
the introduction of time constraints to represent more accurately 
the behaviour of systems in discrete or continuous time.
Timed automata (TA) are such an example: they are finite automata augmented 
with a number of resettable real-time clocks, whose transitions 
are triggered by predicates involving clock values~\cite{Alur-tcs1994}.

More recently, timed models of FSMs (TFSMs) have been proposed in the literature
by the introduction of time constraints such as timed guards or timeouts.
Timed guards restrict the input/output transitions to happen within given
time intervals. 
The meaning of timeouts is the following: if no input is applied at 
a current state for some timeout period, the timed FSM moves from 
the current state to another state using a timeout function;
e.g., timeouts are common in telecommunication protocols and systems. 

For instance, the timed FSM proposed in~\cite{Gromov2009,ElFakih2013,ElFakih-scp2014} features: one clock variable, time constraints 
to limit the time elapsed at a state, and a clock reset when a transition
is executed.
Instead, the timed FSM proposed in~\cite{Merayo2008,Hierons-jlap2009} 
features: one clock variable, time constraints to limit the time elapsed 
when an output has to be produced after an input has been applied to the FSM, 
a clock reset when an output is produced, and timeouts. 

%
%
  
In~\cite{BresolinEVY14} the following models of deterministic TFSMs with 
a single clock were investigated: 
TFSMs with only timed guards, TFSMs with only timeouts, and TFSMs with 
both timed guards and timeouts.

\begin{figure}[tbp]
   \centering
        \begin{tikzpicture}[xscale=1.5,font=\footnotesize]
                \node (tfsm) at (0,0)   {TFSM with timed guards and timeouts};
                \node (fsm-to)  at (-2,-0.9) {TFSM with timeouts};
                \node (fsm-tg)  at (2,-0.9) {TFSM with timed guards};
                \node (fsm-lf) at (-2,-2.1) {Loop-free TFSM with timeouts};
                \node (fsm-lcro) at (2,-2.1) {TFSM with LCRO timed guards};
                \node (fsm) at (0,-3) {Untimed FSM};
                \draw[->] (tfsm) -- (fsm-to);
                \draw[->] (tfsm) -- (fsm-tg);
                \draw[->] (fsm-to) -- (fsm-lf);
                \draw[->] (fsm-to) -- (fsm-lcro);
                \draw[->] (fsm-tg) -- (fsm-lf);
                \draw[->] (fsm-tg) -- (fsm-lcro);
                \draw[<->] (fsm-lcro) -- (fsm-lf);
                \draw[->] (fsm-lcro) -- (fsm);
                \draw[->] (fsm-lf) -- (fsm);
        \end{tikzpicture}

   \caption{Comparison of TFSM models.}\label{fig:comparison}
\end{figure}
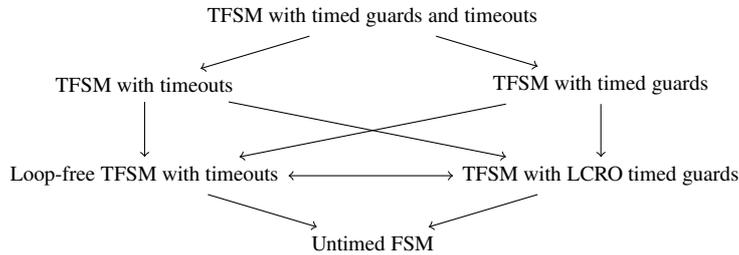

The problem of equivalence checking was solved for all three models, 
their expressive power compared, and subclasses of TFSMs with timeouts
and with timed guards equivalent to each other were characterized (see 
Fig.~\ref{fig:comparison} from~\cite{BresolinEVY14} for a diagram showing 
the expressivity hierarchy of TFSMs with timed guards and timeouts, TFSMs with only timed guards, TFSMs with only timeouts, loop-free TFSMs with timeouts, TFSMs with LCRO - Left Closed Right Open - timed guards, and finally untimed FSMs).
Equivalence checking was obtained by introducing relations of bisimulation that 
define untimed finite state machines whose states include information
on the clock regions, such that the timed behaviours of
two timed FSMs are equivalent if and only if the behaviours of the companion
untimed FSMs are equivalent.
This operation is reminiscent and stronger than the region graph construction
for timed automata~\cite{Alur-tcs1994}. 

Here we work directly with deterministic TFSMs with both timed guards 
and timeouts, since they subsume the previous two models.
For such TFSMs, we give the detailed construction of the untimed FSM 
from a timed FSM (what we get is the \emph{FSM abstraction} of the TFSM),
and then we provide the complete proof that we can describe the
behavior of a TFSM using the corresponding untimed FSM,
i.e., that two deterministic TFSMs are equivalent if and only if their 
timed-abstracted FSMs are equivalent.

Then we study the conditions under which the opposite transformation is possible: we take an untimed deterministic FSM that accepts and produces words from input and output alphabets (both including a special symbol that simulates the passing of time), and we build an equivalent deterministic TFSM with timeouts and timed guards, under the same notion of abstraction of timed words.
This is the key technical result of this paper.

Finally, we apply the previous transformations to perform
the intersection of two deterministic TFSMs, as an example of 
composition operator under which TFSMs are closed.
We prove how the transformation from TFSMs to untimed FSMs of Section~\ref{sec:timedfsms} and the transformation from untimed FSMs to TFSMs of Section~\ref{sec:untimed2timed} can be used to construct the intersection of two TFSMs.

We outline the structure of the paper. 
Sec.~\ref{sec:timedfsms} introduces deterministic timed finite state machines 
with timed guards and timeouts, describes the untiming procedure to obtain 
a finite state machine and proves the bisimulation with the original timed one,
from which an equivalence checking procedure follows. 
This is a revision of the material in ~\cite{BresolinEVY14}, whereas
the following sections are completely new.
Sec.~\ref{sec:untimed2timed} describes the backward transformation 
from untimed FSMs to TFSMs and proves the backward bisimulation relation.
The two results are used in Sec.~\ref{sec:intersection-new} to compute 
the TFSM that is the intersection of two given deterministic TFSMs.
Sec.~\ref{sec:tfsm-ta} relates TFSMs to timed automata, and surveys 
expressiveness and complexity results of various models of timed automata,
with final conclusions drawn in Sec.~\ref{sec:conclusions}.


\section{Models of Timed FSMs (TFSMs)}
\label{sec:timedfsms}

	Let $A$ be a finite alphabet, and let $\bbR^+$ be the set of non-negative reals. A \emph{timed symbol} is a pair $(a,t)$ where $t \in \bbR^+$ is called the \emph{timestamp} of the symbol $a\in A$. 
	A timed word is then defined as a finite sequence $(a_1,t_1)(a_2,t_2)(a_3,t_3)\dots$ of timed symbols where the sequence of timestamps $t_1 \leq t_2 \leq t_3  \leq \dots$ is increasing.
Timestamps represent the \emph{absolute times} at which symbols are received or produced. In the following we will sometime also reason in terms of \emph{relative times}, or \emph{delays}, measured as the difference between the timestamps of two successive symbols. More formally, the delay of a symbol $a_i$ is defined as $t_{i} - t_{i-1}$ when $i > 1$ and as $t_1$ when $i = 1$.

The timed models considered in this paper are initialized input/output machines that operate by reading a \emph{timed input word} $(i_1,t_1)(i_2,t_2)\dots(i_k,t_k)$ defined on some \emph{input alphabet $I$}, and producing a corresponding \emph{timed output word} $(o_1,t_1)$ $(o_2,t_2)\dots(o_k,t_k)$ on some \emph{output alphabet $O$}. 
	The production of outputs is assumed to be instantaneous: the timestamp of the $j$-th output $o_j$ is the same of the $j$-th input $i_j$. Models where there is a delay between reading an input and producing the related output are possible but not considered here.
	Given a timed word $(a_1,t_1)(a_2,t_2)\dots(a_k,t_k)$, 
	$\Untime((a_1,t_1)(a_2,t_2)\dots(a_k,t_k)) = a_1 a_2 \dots a_k$ denotes
	the word obtained when deleting the timestamps.

A timed possibly non-deterministic and partial FSM (TFSM) is an FSM 
augmented with a clock. The clock is a real number that measures the time delay
at a state, and its value is reset to zero when a transition is executed.
In this section we introduce the TFSM model with both timed guards and timeouts defined in~\cite{BresolinEVY14}.
Such a model  subsumes the TFSM model with timed guards only given in~\cite{ElFakih2013,Gromov2009} and the TFSM model with timeouts only given in~\cite{Merayo2008,Zhigulin2011}. In addition, we establish a very precise connection between timed and untimed FSMs, showing that it is possible to describe the behavior of a TFSM using a standard FSM
that is called the \emph{FSM abstraction} of the TFSM. 

A {\em timed guard} defines the time interval when a transition can be executed.
Intuitively, a TFSM in the present state $s$ accepts an input $i$ at a time 
$t$ only if $t$ satisfies the timed guard of some transition labelled with input symbol $i$.
The transition defines the output $o$ to be produced and the next 
state $s'$. 
A {\em timeout} instead defines for how long the TFSM can wait for an input in the present state before spontaneously moving to another state. Each state of the machine has a timeout (possibly $\infty$) and all outgoing transitions of the state have timed guards with upper bounds less than the state timeout. 
The clock is reset to 0 every time the TFSM activates a transition or a timeout expires. 
 
\begin{definition}[Timed FSM]\label{def:tfsm-all}
A timed FSM $M$ is a finite state machine augmented with timed guards and timeouts. Formally, a timed FSM (TFSM) is a 6-tuple $(S, I, O, \lambda_S, s_0,$ $\Delta_S)$
where $S$, $I$, and $O$ are finite disjoint non-empty sets of states, inputs and outputs, respectively, $s_0$ is the initial state,
$\lambda_S \subseteq  S \times \left(I \times \Pi \right) \times  O \times  S$ 
is a transition relation where $\Pi$ is the set of input timed guards, and
$\Delta_S: S \rightarrow S \times \left( N \cup \left\{ \infty \right\} \right)$
is a \textit{timeout function}.
Each guard in $\Pi$ is an interval $g = \langle t_{min}, t_{max} \rangle$ where $t_{min}$ is a nonnegative integer, while $t_{max}$ is either a nonnegative integer or $\infty$, $t_{min} \leq t_{max}$, and  $\langle \in \big\{ ( , [ \big\}$ while $\rangle \in \big\{ ), ] \big\}$. 
\end{definition}

The \emph{timed state} of a TFSM is a pair $(s,x)$ such that $s \in S$ is a state of $M$ and $x \in \bbR^+$ is the current value of the clock, with the additional constraint that $x < \Delta_S(s)_{\downarrow \bbN}$ (the value of the clock cannot exceed the timeout). If no input is applied at a current state $s$ before the timeout 
$\Delta_S\left(s\right)_{\downarrow \bbN}$ expires, then the TFSM will move to anther state 
$\Delta_S\left(s\right)_{\downarrow S}$ as prescribed by the timeout function. 
If $\Delta_S\left(s\right)_{\downarrow \bbN} = \infty$, then the TFSM can stay at state $s$ infinitely long waiting for an input. An input/output transition can be triggered only if the value of the clock is inside the guard $\langle t_{min}, t_{max}\rangle$ labeling the transition.  
\emph{Transitions} between timed states can be of two types:

\begin{compactitem}
	\item \emph{timed transitions} of the form $(s,x) \trans{t} (s',x')$ where $t \in \bbR^+$, representing the fact that a delay of $t$ time units has elapsed without receiving any input. The relation $\trans{t}$ is the smallest relation closed under the following properties:
		\begin{compactitem}
			\item for every timed state $(s,x)$ and delay $t \geq 0$, if $x + t < \Delta_S(s)_{\downarrow \bbN}$, then $(s,x) \trans{t} (s, x+t)$;
			\item for every timed state $(s,x)$ and delay $t \geq 0$, if $x + t = \Delta_S(s)_{\downarrow \bbN}$, then $(s,x) \trans{t} (s', 0)$ with $s' = \Delta_S(s)_{\downarrow S}$;
			\item if $(s,x) \trans{t_1} (s',x')$ and $(s',x') \trans{t_2} (s'',x'')$ then $(s,x) \trans{t_1+t_2} (s'',x'')$.
			\end{compactitem}

	\item \emph{input/output transitions} of the form $(s,x) \trans{i,o} (s',0)$, representing reception of the input symbol $i \in I$, production of the output $o \in O$ and reset of the clock. An input/output transition can be activated only if there exists $(s,i,\langle t_{min}, t_{max}\rangle,o,s') \in \lambda_S$ such that $x \in \langle t_{min}, t_{max}\rangle$.
\end{compactitem}

\noindent A \emph{timed run} of a TFSM $M$ interleaves timed transitions with input/output transitions.
	Given a timed input word $v = (i_1,t_1)(i_2,t_2)\dots$ $(i_k,t_k)$, a timed run of $M$ over $v$ is a finite sequence $\rho = (s_0,0) \trans{t_1} (s_0',x_0) \trans{i_1,o_1} (s_1,0) \trans{t_2-t_1} (s_1',x_1) \trans{i_2,o_2} (s_2,0) \trans{t_3-t_2} \dots  \trans{i_k,o_k} (s_k,0)$ such that $s_0$ is the initial state of $M$, and for every $j \geq 0$ $(s_j,0) \trans{t_{j+1}-t_j} (s_j',x_j) \trans{i_{j+1},o_{j+1}} (s_{j+1},0)$ is a valid sequence of transitions of $M$. The timed run $\rho$ is said to \emph{accept} the timed input word $v= (i_1,t_1)(i_2,t_2)\dots(i_k,t_k)$ and to \emph{produce} the timed output word $u = (o_1,t_1)(o_2,t_2)\dots(o_k,t_k)$. 	The behavior of $M$ is defined in terms of the input/output words accepted and produced by the machine.

%

The usual definitions for FSMs of deterministic and non-deterministic, submachine, etc., can be extended to the timed FSM model considered here. In particular, a TFSM is {\em complete} if for each state $s$, input $i$ and value of the clock $x$ there exists at least one transition $(s,x) \trans{i,o} (s',0)$,  otherwise the machine is {\em partial}. A TFSM is \emph{deterministic} if for each state $s$, input $i$ and value of the clock $x$  there exists at most one input/output transition, otherwise is {\em non-deterministic}.

 	For the sake of simplicity, from now on we consider only \emph{deterministic} machines (possibly partial), leaving the treatment of non-deterministic TFSMs to future work.

	\begin{definition}\label{def:TGTO_behavior}
	The behavior of a deterministic TFSM $M$ is a partial mapping $B_M: (I\times \bbR)^* \mapsto (O\times \bbR)^*$ that associates every input word $w = (i_1,t_1)(i_2,t_2)\dots(i_k,t_k)$ accepted by $M$ with the unique output word $B_M(w) = (o_1,t_1)(o_2,t_2)\dots(o_k,t_k)$ produced by $M$ under input $w$, if it exists. When $M$ is an untimed FSM the behavior is defined as a partial mapping $B_M: I^* \mapsto O^*$.

Two machines $M$ and $M'$ with the same input and output alphabets are \emph{equivalent} if and only if they have same behavior, i.e, $B_M = B_{M'}$.
	\end{definition}

	So for a partial and deterministic TFSM $M$, we have that for every input word $w$, $B_M(w)$ is either not defined or a singleton set. 
	Moreover, we can consider the transition relation of the machine as a partial function $\lambda_S: S \times I \times \bbR^+ \mapsto S \times O$ that takes as input the current state $s$, the delay $t$ and the input symbol  $i$ and produces the (unique) next state and output symbol $\lambda_S(s,t,i) = (s',o)$ such that $(s,0)\trans{t}(s',t')\trans{i,o}(s'',0)$. With a slight abuse of the notation, we can extend it to a partial function $\lambda_S: S \times (I \times \bbR^+)^* \mapsto S \times O^*$ that takes as inputs the initial state $s$ and a timed word $w$, and returns the state reached by the machine after reading $w$ and the generated output word. We will use $s \trans{w,u} s'$ as a shorthand for $\lambda_S(s,w) = (s',u)$.


\paragraph{Abstracting TFSMs with timeouts and timed guards.}

In this section we show how to build an abstract untimed FSM that describes the behaviour of a TFSM with guards. To do this we define an appropriate notion of abstraction of a timed word into an untimed word and a notion of bisimulation to compare a TFSM with guards with untimed FSM. From the properties of the bisimulation relation, we conclude that the behaviour of the abstract untimed FSM is the abstraction of the behaviour of the TFSM.

For every $N \geq 0$, we define $\bbI_N$ as the set of intervals 
$	\bbI_N = \{[n,n] \mid n \leq N\} \cup \{(n,n+1) \mid 0 \leq n < N\} \cup \{(N,\infty)\}.$
Given a TFSM $M$, we define $\max(M)$ as the maximum between the greatest timeout value of the function $\Delta_S$ (different from $\infty$) and the greatest integer constant (different from $\infty$) appearing in the guards of $\lambda_S$. The set $\bbI_N$ defines a discretization of the clock values of TFSMs. The following lemma proves that such a discretization is correct, namely, that a TFSM cannot distinguish between two timed states where the discrete state is the same and the values of the clocks are in the same interval of $\bbI_N$.

\begin{lemma}\label{lem:discretization}
Let $M= (S, I, O, \lambda_S, s_0, \Delta_S)$ be a deterministic TFSM, $N = \max(M)$, and let $(s,x)$ and $(s,x')$ be two timed states of $M$ such that $x, x' \in \langle n,n'\rangle$ for some interval $\langle n,n'\rangle \in \bbI_N$. Then $\lambda_S(s, x, i)  = \lambda_S(s, x', i)$ for every input symbol $i \in I$.
\end{lemma}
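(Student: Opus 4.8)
The plan is to reduce the statement to a purely combinatorial fact about how the regions of $\bbI_N$ interact with the integer-bounded guards of $M$. By definition $\lambda_S(s,x,i)$ is the unique pair $(s',o)$ for which there is a transition $(s,i,g,o,s') \in \lambda_S$ with $x \in g$ (uniqueness coming from determinism), and is undefined when no such transition exists. Hence it suffices to prove that, for every guard $g = \langle t_{min}, t_{max}\rangle$ labelling a transition out of $s$ on input $i$, we have $x \in g$ if and only if $x' \in g$: once the set of \emph{applicable} transitions coincides for $x$ and $x'$, the two partial-function values agree (both undefined, or both equal to the value selected by the common applicable transition). This single equivalence, established uniformly over all guards out of $s$ on $i$, is the whole content.

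To prove $x \in g \iff x' \in g$, I would use that every finite endpoint $t_{min}, t_{max}$ is a nonnegative integer bounded by $N = \max(M)$, together with the hypothesis that $x, x'$ lie in a common region $\langle n, n'\rangle \in \bbI_N$, and argue by cases on the three possible shapes of that region. If the region is a singleton $[n,n]$, then $x = x' = n$ and the equivalence is immediate. If the region is an open unit interval $(n,n+1)$ with $n < N$, the key observation is that no integer $m \leq N$ lies strictly inside $(n,n+1)$, since any such $m$ satisfies $m \leq n$ or $m \geq n+1$; consequently neither endpoint of $g$ can split the interval, so $(n,n+1)$ is either entirely contained in $g$ or entirely disjoint from it, and in particular $x$ and $x'$ agree on membership.

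The remaining and most delicate case is the unbounded region $(N,\infty)$, where the interaction with an infinite upper bound (either of the guard or, implicitly, of the state timeout that allows the clock to exceed $N$ in the first place) must be handled. Here membership in $g$ is determined solely by $t_{max}$: since $x, x' > N \geq t_{min}$, the lower constraint is satisfied regardless of whether it is open or closed; and if $t_{max} = \infty$ then both $x, x' \in g$, whereas if $t_{max} \leq N$ is finite then $x, x' > N \geq t_{max}$ forces both out of $g$. Either way membership is constant on $(N,\infty)$. I expect this ray case to be the main point to get right, precisely because it is where the boundedness of all finite guard constants by $N$ is used essentially: it guarantees that no guard can separate two arbitrarily large clock values. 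Assembling the three cases yields $x \in g \iff x' \in g$ for every relevant guard, and therefore $\lambda_S(s,x,i) = \lambda_S(s,x',i)$.
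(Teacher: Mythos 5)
Your proof is correct and follows essentially the same route as the paper: both reduce the claim to the observation that a guard with integer endpoints bounded by $N$ cannot separate two points of the same region of $\bbI_N$, so the set of applicable transitions (and hence the value of the deterministic partial function $\lambda_S$) is constant on each region. If anything, your explicit treatment of the ray $(N,\infty)$ is more complete than the paper's argument, which proceeds by contradiction and only discusses the open unit intervals $(n,n+1)$.
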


\begin{proof}
Suppose by contradiction that there exist two timed states $(s,x)$ and $(s,x')$ such that $x, x' \in \langle n,n'\rangle$ for some $\langle n,n'\rangle \in \bbI_N$ and $\lambda_S(s, x, i) \neq \lambda_S(s, x', i)$. 
Since $x \neq x'$ we have that the interval $\langle n,n'\rangle$ must be an open interval of the form $(n,n+1)$ (it cannot be a point interval $[n,n]$) with $n = \lfloor x \rfloor = \lfloor x' \rfloor$ and $n + 1 = \lceil x \rceil = \lceil x' \rceil$. Suppose, without loss of generality, that $\lambda_S(s, x, i)$ is defined and equal to $(s', o)$. By the definition of TFSM we have that there exists a transition $(s, i, \langle t_{min}, t_{max}\rangle, o, s') \in \lambda_S$ such that $x \in  \langle t_{min}, t_{max}\rangle$. Since $t_{min}, t_{max}$ are nonnegative integers (or $\infty$), it is easy to see that $(n,n+1) \subseteq \langle t_{min}, t_{max}\rangle$. Hence, $x' \in \langle t_{min}, t_{max}\rangle$ and thus $\lambda_S(s, x', i) = (s', o) = \lambda_S(s, x, i)$, in contradiction with the hypothesis that $\lambda_S(s, x, i) \neq \lambda_S(s, x', i)$.
\end{proof}

We can exploit the discretization given by $\bbI_N$ to build the abstract FSM as follows.
States of the abstract FSM will be pairs $(s,\langle n,n'\rangle)$ where $s$ is a state of $M$ and $\langle n,n'\rangle$ is either a point-interval $[n,n]$ or an open interval $(n,n+1)$ from the set 
$\mathbb{I}_N$ defined above, where $N = \max(M)$.  Transitions can be either standard input/output transitions labelled with pairs from $I \times O$ or ``time elapsing'' transitions labelled with the special pair $(\Half,\Half)$, which intuitively represents a time delay $0 < t^* < 1$ without inputs. 

\begin{definition}\label{def:abstract-tfsm-all}
Given a TFSM with timeouts and timed guards $M = (S, I, O, \lambda_S, s_0, \Delta_S)$, let $N = \max(M)$. We define the \emph{$\Half$-abstract FSM} $A_M = (S\times \mathbb{I}_N, I \cup \{\Half\}, O\cup \{\Half\}, \lambda_A, (s_0,[0,0]))$ as the untimed FSM such that:
\begin{compactitem}
	\item $(s,[n,n]) \trans{\Half,\Half} (s,(n,n+1))$ if and only if $n+1 \leq \Delta_S(s)_{\downarrow\bbN}$;
	\item $(s,(n,n+1)) \trans{\Half,\Half} (s,[n+1,n+1])$ if and only if $n+1 < \Delta_S(s)_{\downarrow\bbN}$;
	\item $(s,(n,n+1)) \trans{\Half,\Half} (s',[0,0])$ if and only if $\Delta_S(s) = (s',n+1)$;
	\item $(s,[N,N]) \trans{\Half,\Half} (s,(N,\infty))$ and $(s,(N,\infty)) \trans{\Half,\Half} (s,(N,\infty))$ if and only if \linebreak $\Delta_S(s)_{\downarrow\bbN} = \infty$;
	\item $(s,\langle n,n'\rangle) \trans{i,o} (s',[0,0])$ if and only if there exists $(s,i,\langle t,t'\rangle,o,s') \in \lambda_S$ such that $\langle n,n'\rangle \subseteq \langle t,t'\rangle$.
\end{compactitem}
\end{definition}

\begin{figure}[tbp]
\centering
	\subfigure[TFSM with timeouts and timed guards $M$]{
	\begin{tikzpicture}[font=\footnotesize,yscale=0.8,xscale=0.65]
\node[draw,circle,minimum size=20pt] (q0)	 at (0,0)	{$s_0$};
\node[draw,circle,minimum size=20pt]	(q1) at (3.5,0)	{$s_1$};
		
\draw[->] (-0.5,0.5) -- (q0);
\path[->] (q0) edge[loop above] node        {$[0,1):i/o_1$} ();
\path[->] (q0) edge[bend left]  node[above] {$t = 1$} (q1);
\path[->] (q1) edge[bend left]  node[below] {$(1,\infty):i/o_1$} (q0);
\path[->] (q1) edge[loop above] node        {$[0,1]:i/o_2$} ();
\path (-3,-2)--(7,2);
	\end{tikzpicture}\label{fig:tfsm-all-ex}}

	\subfigure[$\Half$-Abstract untimed FSM $A_M$]{
\begin{tikzpicture}[yscale=0.75,xscale=1,font=\footnotesize]
		\node[draw,circle,minimum size=20pt,text width=4ex,text centered] (q00)	 at (-4,2.25)	{$s_0$ \\ $[0,0]$};
		\node[draw,circle,minimum size=20pt,text width=4ex,text centered] (q005)	 at (-2,2.25)	{$s_0$ \\ $(0,1)$};
		\node[draw,circle,minimum size=20pt,text width=4ex,text centered]	(q10) at (0,2.25)	{$s_1$ \\ $[0,0]$};
		\node[draw,circle,minimum size=20pt,text width=4ex,text centered]	(q105) at (1,0)	{$s_1$ \\ $(0,1)$};
		\node[draw,circle,minimum size=20pt,text width=4ex,text centered]	(q11) at (-1,0)	{$s_1$ \\ $[1,1]$};
		\node[draw,circle,minimum size=20pt,text width=4ex,text centered]	(q115) at (-3,0)	{$s_1$ \\ $(1,\infty)$};
		
		\draw[->] (-4.5,3.25) -- (q00);
		\path[->] (q00) edge[loop above] node        {$i/o_1$} ();
		\path[->] (q00) edge node[below]  {$\Half/\Half$} (q005);
		\path[->] (q005) edge[bend right] node[above]        {$i/o_1$} (q00);
		\path[->] (q005) edge node[above] {$\Half/\Half$} (q10);
		\path[->] (q10) edge[loop above] node  {$i/o_2$} ();
		\path[->] (q10) edge node[left] {$\Half/\Half$} (q105);
		\path[->] (q105) edge[bend right] node[right]  {$i/o_2$} (q10);
		\path[->] (q105) edge node[below] {$\Half/\Half$} (q11);
		\path[->] (q11) edge[bend left] node[near start,left]  {$i/o_2$} (q10);
		\path[->] (q11) edge node[below] {$\Half/\Half$} (q115);
		\path[->] (q115) edge[bend left] node[left] {$i/o_1$} (q00);
		\path[->] (q115) edge[loop left] node {$\Half/\Half$} (q115);
			\end{tikzpicture}\label{fig:tfsm-all-ab}}

	\caption{{$\Half$-abstraction} of TFSM with timeout and timed guards.}
  \label{fig:tfsm-all-abstraction}
\end{figure}
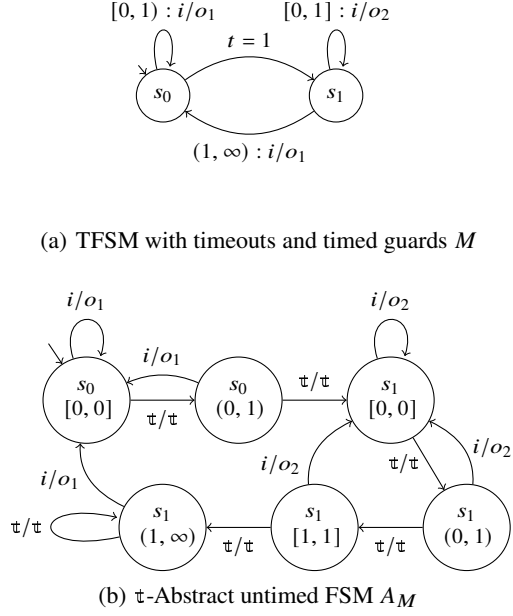

Figure~\ref{fig:tfsm-all-abstraction} shows an example of a TFSM with timeouts and its $\Half$-abstraction. In this case the untimed abstraction accepts untimed input words on $I \cup \{\Half\}$. The delay is implicitly represented by sequences of the special input symbol $\Half$ interleaving the occurrences of the real input symbols from $I$. The representation of delays in the abstraction is quite involved:
\begin{compactitem}
	\item an \emph{even number} $2n$ of $\Half$ symbols represents a delay of \emph{exactly} $n$ time units;
	\item an \emph{odd number} $2n + 1$ of $\Half$ symbols represents a delay $t$ included in the open interval $(n,n+1)$.
\end{compactitem}
The notion of abstraction of a timed word captures the above intuition. 

\begin{definition}\label{def:Half-abstract-timed-word}
Let $\Half(t)$ be a function mapping a delay $t \in \bbR$ to a sequence of $\Half$ as follows:
$\Half(t) = 	\Half^{2t}$ if $\floor{t} = t$,  $\Half(t) = \Half^{2\floor{t} + 1}$ otherwise.
Given a finite alphabet $A$ and a finite timed word $v = (a_1,t_1)$ $(a_2,t_2)(a_3,t_3)\dots(a_m,t_m)$, we define its \emph{$\Half$-abstraction} as the finite word  $\Half(v) = \Half(t_1)a_1 \Half(t_2-t_1) \dots \Half(t_j-t_{j-1}) a_j \Half(t_{j+1}-t_j)\dots \Half(t_{m-1}-t_m) a_m$.
\end{definition}

$\Half$-bisimulation connects \emph{timed states $(s,x)$} of a timed FSM with states of an untimed FSM.
Conditions 1.\ and 2.\ formalize the connection between timed transitions and the special symbol $\Half$. Conditions 3.\ and 4.\ formalize the connection between actual input/output transition in the two machines. 

\begin{definition}\label{def:Half-bisim}
Given a TFSM with timed guards and timeouts  $T = (S, I, O, \lambda_S, s_0, \Delta_S)$ and an untimed FSM $U = (R, I \cup \{\Half\}, O \cup \{\Half\}, \lambda_R, r_0)$, a \emph{$\Half$-bisimulation} is a relation $\sim \subseteq (S\times \bbR^+) \times R$ that respects the following conditions for every pair of states $(s,x) \in S\times \bbR^+$ and $r \in R$ such that $(s,x) \sim r$:
\begin{compactenum}
	\item if $(s,x)\trans{t}(s',x')$ with $0 < t < 1$ and either $x \in \mathbb{N}$ or $x + t \in \mathbb{N}$ then there exists $r'\in R$ such that $r \trans{\Half,\Half} r'$ and $(s',x') \sim r'$;
	\item if $r \trans{\Half,\Half} r'$ then for every $0 < t < 1$ such that either $x \in \mathbb{N}$ or $x + t \in \mathbb{N}$ there exists $(s',x')\in S\times \bbR^+$ such that $(s,x)\trans{t}(s',x')$ and $(s',x') \sim r'$;
\item if $(s,x)\trans{i,o}(s',0)$ then there exists $r'\in R$ such that $r \trans{i,o} r'$ and $(s',0) \sim r'$;	
	\item if $r \trans{i,o} r'$ then there exists $(s',0)\in S\times \bbR^+$ such that $(s,x)\trans{i,o}(s',0)$ and $(s',0) \sim r'$.
\end{compactenum}
$T$ and $U$ are  \emph{$\Half$-bisimilar} if there exists a $\Half$-bisimulation $\sim \subseteq S \times R$ such that $(s_0,0) \sim r_0$.
\end{definition}

To prove that $\Half$-bisimilar machines have the same behavior we need to introduce the following technical result, connecting timed transitions with the special symbol $\Half$.

\begin{lemma}\label{lem:Half-bisimilar-time}
Given a TFSM with timed guards and timeouts $T = (S, I, O, \lambda_S, s_0, \Delta_S)$ and an untimed FSM $U = (R, I\cup\{\Half\}, O\cup\{\Half\}, \lambda_R, r_0)$, every $\Half$-bisimulation relation $\sim \subset (S\times\bbR^+)\times R$ respects the following properties for every $(s,0) \sim r$ and $t > 0$:
\begin{compactenum}[\it (i)]
	\item if $(s,0) \trans{t} (s',x')$ then there exists $r'$ such that $(s',x') \sim r'$ and $r \trans{\Half(t),\Half(t)} r'$;
	\item if $r \trans{\Half(t),\Half(t)} r'$  then there exists $(s',x') \sim r'$ such that $(s,0) \trans{t} (s',x')$.
\end{compactenum}
\end{lemma}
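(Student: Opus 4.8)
The plan is to lift the one-step bisimulation conditions (conditions~1 and~2 of Definition~\ref{def:Half-bisim}, which only speak about delays $t$ with $0<t<1$ that start or end at an integer clock value) to an arbitrary delay $t>0$ by cutting the total delay into a chain of such elementary half-steps. The number of pieces will be exactly the length $|\Half(t)|$ of the abstract word, namely $2\floor{t}$ when $t\in\bbN$ and $2\floor{t}+1$ otherwise. The crucial observation that makes the bookkeeping work is the following invariant: along the timed evolution $(s,0)\trans{t}(s',x')$ that starts with clock value $0$, the clock value at absolute time $\tau$ is congruent to $\tau$ modulo $1$. This holds because the clock is reset to $0$ only when a timeout expires, timeout values are integers, and the run starts at the integer clock value $0$; hence every reset occurs at an integer absolute time, and the clock and the absolute time always share the same fractional part. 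In particular the clock attains an integer exactly when the absolute time does, and no timeout fires strictly inside an open unit interval of absolute time.

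Concretely I cut the interval of absolute time at the checkpoints $0,\tfrac12,1,\tfrac32,2,\dots$ up to $\floor{t}$, followed by $t$ itself when $t\notin\bbN$. This yields $K=|\Half(t)|$ consecutive half-steps, each of length in $(0,1)$, alternating between ``integer $\to$ midpoint'' steps (whose starting clock value lies in $\bbN$) and ``midpoint $\to$ integer'' steps (whose clock value after the delay, before any reset, lies in $\bbN$). By determinism each checkpoint determines a unique timed state $(s^{(k)},x^{(k)})$, with $(s^{(0)},x^{(0)})=(s,0)$ and $(s^{(K)},x^{(K)})=(s',x')$, and each half-step is a genuine timed transition $(s^{(k)},x^{(k)})\trans{t_{k+1}}(s^{(k+1)},x^{(k+1)})$ satisfying the precondition ``$x\in\bbN$ or $x+t\in\bbN$'' of conditions~1 and~2.

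For part~(i) I set $r^{(0)}:=r$ and apply condition~1 to each half-step in turn: assuming $(s^{(k)},x^{(k)})\sim r^{(k)}$, condition~1 produces $r^{(k+1)}$ with $r^{(k)}\trans{\Half,\Half}r^{(k+1)}$ and $(s^{(k+1)},x^{(k+1)})\sim r^{(k+1)}$. Concatenating the $K$ resulting transitions gives $r\trans{\Half^{K},\Half^{K}}r^{(K)}$, that is $r\trans{\Half(t),\Half(t)}r^{(K)}$, and $(s',x')=(s^{(K)},x^{(K)})\sim r^{(K)}=:r'$, as required. For part~(ii) I start from the given chain $r=r^{(0)}\trans{\Half,\Half}r^{(1)}\trans{\Half,\Half}\cdots\trans{\Half,\Half}r^{(K)}=r'$ of length $K=|\Half(t)|$ and build the timed transition forward, fixing the delays in advance ($\tfrac12$ for every full half-step and $t-\floor{t}$ for the final piece when $t\notin\bbN$). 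Proceeding by induction along the chain, at step $k+1$ the invariant guarantees that the clock value $x^{(k)}$ already produced satisfies the required integer condition for the chosen delay $t_{k+1}$, so condition~2 applies and yields a timed state $(s^{(k+1)},x^{(k+1)})\sim r^{(k+1)}$ with $(s^{(k)},x^{(k)})\trans{t_{k+1}}(s^{(k+1)},x^{(k+1)})$. Closing the chain under the composition rule for timed transitions gives $(s,0)\trans{t}(s^{(K)},x^{(K)})$ with $(s^{(K)},x^{(K)})\sim r'$.

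The main obstacle is the correct treatment of timeouts, and this is exactly what the invariant is designed to neutralise. Without it one might fear that a reset occurring part-way through a half-step could either desynchronise the count $K=|\Half(t)|$ from the number of integer crossings or violate the ``$x\in\bbN$ or $x+t\in\bbN$'' precondition; the invariant rules this out by confining every reset to an integer absolute time. A secondary point to verify is the degenerate range $0<t<1$ (a single half-step, $\Half(t)=\Half$) as the base of the iteration, together with the bookkeeping that the chosen checkpoints produce precisely $2\floor{t}$ or $2\floor{t}+1$ pieces, in agreement with Definition~\ref{def:Half-abstract-timed-word}.
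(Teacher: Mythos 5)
Your proof is correct and follows essentially the same route as the paper's: both decompose the delay $t$ into $|\Half(t)|$ elementary sub-delays, each starting or ending at an integer clock value, and then chain conditions~1 and~2 of Definition~\ref{def:Half-bisim} along the resulting sequence of intermediate states (the paper organizes this as an induction on $|\Half(t)|$ peeling off the last piece, you as an explicit iteration over half-unit checkpoints). The only substantive difference is that you make explicit the invariant that the clock value stays congruent to the absolute time modulo~$1$ (because timeouts are integers), which the paper uses implicitly; this is a worthwhile clarification but not a different argument.
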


\begin{proof}
The proof is by induction on the number of symbols $n$ in $\Half(t)$.
For the basis of the induction, suppose $n = 1$ and let $(s,0) \sim r$: by the definition of $\Half(t)$, we have that $0 < t < 1$. The two properties are a direct consequence of the definition of $\Half$-bisimulation. By condition 1 of Definition~\ref{def:Half-bisim}, we have that for every $0 < t < 1$, $(s,0) \trans{t} (s',x')$ implies that there exists $r'$ such that $(s',x') \sim r'$ and $r \trans{\Half,\Half} r'$. By condition 2 of Definition~\ref{def:Half-bisim}, we have that for every $0 < t < 1$, $r \trans{\Half,\Half} r'$ implies that there exists $(s',x') \sim r'$ such that $(s,0) \trans{t} (s',x')$.

For the inductive case, suppose that $n \geq 1$ and that the Lemma holds for $n - 1$. Now, let $(s,0) \trans{t} (s',x')$. Two cases may arise: either $\floor{t} = t$ or $\floor{t} > t$. In the former case, consider the timed state $(s'',x'')$ such that $(s,0) \trans{t-0.5} (s'',x'') \trans{0.5} (s',x')$.\footnote{Here $0.5$ is an arbitrary value chosen for the sake of simplicity. Indeed, the argument holds for every delay $0 < t^* < 1$.} Since the number of symbols in $\Half(t-0.5)$ is exactly $n-1$, by inductive hypothesis we have that there exists $r''$ such that $(s'',x'') \sim r''$ and $r \trans{\Half,\Half^{n-1}} r''$. By condition 1 of Definition~\ref{def:Half-bisim}, we have that there exists $r'$ such that $(s',x') \sim r'$ and $r'' \trans{\Half,\Half} r'$ and thus that $r \trans{\Half(t),\Half(t)} r'$. To prove property \textit{(ii)}, suppose $r \trans{\Half(t),\Half(t)} r'$ and consider the state $r''$ such that $r \trans{\Half,\Half^{n-1}} r''\trans{\Half,\Half} r'$. By inductive hypothesis we have that there exists $(s'',x'') \sim r''$ such that $(s,0) \trans{t-0.5} (s'',x'')$. By condition 2 of Definition~\ref{def:Half-bisim} it is possible to find a state $(s',x')$ such that $(s',x') \sim r'$ and $(s'',x'') \trans{0.5} (s',x')$. This shows that $(s,0) \trans{t} (s',x')$. 
When $\floor{t} > t$, we can consider the timed state $(s'',x'')$ such that $(s,0) \trans{\floor{t}} (s'',x'') \trans{t - \floor{t}} (s',x')$. Since the number of symbols in $\Half(\floor{t})$ is exactly $n-1$, by an argument similar to the above we can prove that both properties \emph{(i)} and \emph{(ii)} hold also in this case, concluding the proof. 
\end{proof}

The following lemma proves that $\Half$-bisimilar machines have the same behavior.

\begin{lemma}\label{lem:Half-bisimilar-then-equiv}
Given a TFSM with timeouts $T = (S, I, O, \lambda_S, s_0, \Delta_S)$ and an untimed FSM $U = (R, I\cup\{\Half\}, O\cup\{\Half\}, \lambda_R, r_0)$, if there exists a $\Half$-bisimulation $\sim$ such that $(s_0,0) \sim r_0$ then for every timed input word $v = (i_1,t_1)\dots(i_m,t_m)$ we have that $\Half(B_T(v)) = B_U(\Half(v))$.
\end{lemma}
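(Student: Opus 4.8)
The plan is to run the untimed machine $U$ on $\Half(v)$ in lock-step with the unique timed run of $T$ on $v$, matching the two outputs block by block. Recall that a timed run of $T$ over $v = (i_1,t_1)\dots(i_m,t_m)$ factors as
\[
(s_0,0) \trans{t_1} (s_0',x_0) \trans{i_1,o_1} (s_1,0) \trans{t_2-t_1} \dots \trans{i_m,o_m} (s_m,0),
\]
while $\Half(v) = \Half(t_1)\, i_1\, \Half(t_2-t_1)\, i_2 \cdots \Half(t_m-t_{m-1})\, i_m$ decomposes into alternating $\Half$-blocks and input symbols in exactly the same pattern. Since outputs are instantaneous, $B_T(v) = (o_1,t_1)\dots(o_m,t_m)$ carries the same timestamps as $v$, so by Definition~\ref{def:Half-abstract-timed-word} its abstraction $\Half(B_T(v)) = \Half(t_1)\, o_1 \cdots \Half(t_m-t_{m-1})\, o_m$ has $\Half$-blocks identical to those of $\Half(v)$. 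This alignment is what makes the two abstractions directly comparable.

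I would prove, by induction on $j \in \{0,\dots,m\}$, the invariant: there is a run of $U$ on the prefix $\Half(t_1) i_1 \cdots \Half(t_j-t_{j-1}) i_j$ of $\Half(v)$ ending in a state $r_j$ with $(s_j,0)\sim r_j$ and whose output is $\Half(t_1) o_1 \cdots \Half(t_j-t_{j-1}) o_j$. The base case $j=0$ is the hypothesis $(s_0,0)\sim r_0$ together with the empty run. For the step, I would first treat the time block $(s_j,0)\trans{t_{j+1}-t_j}(s_j',x_j)$: if $t_{j+1}-t_j>0$, Lemma~\ref{lem:Half-bisimilar-time}(i) yields $r_j'$ with $(s_j',x_j)\sim r_j'$ and $r_j \trans{\Half(t_{j+1}-t_j),\Half(t_{j+1}-t_j)} r_j'$, i.e.\ $U$ reads and emits exactly the block $\Half(t_{j+1}-t_j)$; if $t_{j+1}-t_j=0$ the block is empty and one takes $r_j'=r_j$, $(s_j',x_j)=(s_j,0)$. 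Then condition~3 of Definition~\ref{def:Half-bisim}, applied to $(s_j',x_j)\sim r_j'$ and $(s_j',x_j)\trans{i_{j+1},o_{j+1}}(s_{j+1},0)$, gives $r_{j+1}$ with $r_j'\trans{i_{j+1},o_{j+1}}r_{j+1}$ and $(s_{j+1},0)\sim r_{j+1}$, extending both the input and the output by one block. At $j=m$ this shows that $U$ accepts $\Half(v)$ along a run whose output is $\Half(B_T(v))$.

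To turn these existence statements into the claimed equality of partial behaviors, I would invoke determinism. Since $U$ is deterministic, $\Half(v)$ admits at most one run in $U$, so the run constructed above is \emph{the} run and its output is exactly $B_U(\Half(v))$; hence $B_T(v)$ defined implies $B_U(\Half(v)) = \Half(B_T(v))$. For the converse definedness I would run the argument backwards: given a run of $U$ on $\Half(v)$, its block decomposition is forced because $\Half \notin I$, and using Lemma~\ref{lem:Half-bisimilar-time}(ii) on each $\Half$-block (legitimately, since every input/output transition of $T$ resets the clock, so each time block indeed starts from clock $0$) with the \emph{given} delays $t_{j+1}-t_j$ of $v$, together with condition~4 of Definition~\ref{def:Half-bisim} on each input, I reconstruct a timed run of $T$ over $v$; thus $B_T(v)$ is defined whenever $B_U(\Half(v))$ is. Combining the two directions yields $\Half(B_T(v)) = B_U(\Half(v))$ as partial functions, i.e.\ both sides are simultaneously undefined or equal.

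The main obstacle I anticipate is not the inductive simulation but the careful coupling of \emph{existence} with \emph{uniqueness}: the bisimulation conditions and Lemma~\ref{lem:Half-bisimilar-time} only assert that matching transitions exist, whereas the statement concerns the single-valued behaviors $B_T$ and $B_U$, so determinism of both machines must be used to identify the constructed runs with the actual ones and to transfer definedness in both directions. A secondary point requiring explicit care is the zero-delay case $t_{j+1}=t_j$, where $\Half(t_{j+1}-t_j)=\varepsilon$ and Lemma~\ref{lem:Half-bisimilar-time} (stated for $t>0$) does not apply; there the $\Half$-block is empty on both sides and the invariant is preserved trivially, but this boundary case should be flagged rather than silently absorbed.
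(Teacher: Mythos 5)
Your proposal is correct and follows essentially the same route as the paper's proof: an induction along the word that handles each delay block via Lemma~\ref{lem:Half-bisimilar-time} and each input/output step via conditions 3 and 4 of Definition~\ref{def:Half-bisim}, with the zero-delay case treated separately and determinism used to transfer definedness in both directions. The only differences are presentational (you build prefixes from the front while the paper peels off the last symbol, and you are somewhat more explicit about identifying the constructed run with the unique run of the deterministic machines).
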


\begin{proof}
We prove the lemma by showing that the following claim holds:

\begin{quote}
\em for every pair of states $s \in S$ and $r \in R$ such that $(s,0) \sim r$ and timed word $v$, $\lambda_S(s,v) = (s',w)$ if and only if $\lambda_R(r,\Half(v)) = (r',\Half(w))$ with $(s',0) \sim r'$.
\end{quote}

We prove the claim by induction on the length $m$ of the input word. Suppose $m = 1$, $v = (i_1,t_1)$ and $w = (o_1,t_1)$. We have to show that $\lambda_S(s,(i_1,t_1)) = (s',(o_1,t_1))$ if and only if $\lambda_R(r,\Half(i_1,t_1)) = (r',\Half(o_1,t_1))$ for some $r'$ such that $(s',0) \sim r'$.

To prove the direct implication, suppose $\lambda_S(s,v) = (s',w)$. By the definition of TFSM we have that $\lambda_S(s, (i_1,t_1)) = (s_1,(o_1,t_1))$ if and only if there exists a timed state $(s',x')$ such that $(s,0) \trans{t_1} (s',x') \trans{i_1,o_1} (s_1,0)$. We distinguish between two cases depending on the value of $t_1$. 

\begin{compactitem}
	\item If $t_1 = 0$, by condition 3. of the definition of $\Half$-bisimulation (since $(s',x') \trans{i_1,o_1} (s_1,0)$), there exists $r_1 \in R$ such that $r \trans{i_1,o_1} r_1$. Hence, $\lambda_R(r,\Half(i_1,t_1)) = (r_1,\Half(o_1,t_1))$.
	
	\item If $t_1 > 0$, by Lemma~\ref{lem:Half-bisimilar-time} (i), there exists $r'$ such that $r \trans{\Half(t_1),\Half(t_1)} r'$ and $(s',x') \sim r'$. By condition 3. of the definition of $\Half$-bisimulation (since $(s',x') \trans{i_1,o_1} (s_1,0)$), we have that it is possible to find a state $r_1 \in R$ such that $r' \trans{i_1,o_1} r_1$. This implies that under input $\Half(t_1) i_1 = \Half(i_1,t_1)$ the FSM $U$ produces the output word $\Half(t_1) o_1 = \Half(o_1,t_1)$, and thus we can conclude that $\lambda_R(r,\Half(i_1,t_1)) = (r_1,\Half(o_1,t_1))$.
\end{compactitem}

To prove the converse implication, suppose $\lambda_R(r,\Half(i_1,t_1)) = (r_1,\Half(o_1,t_1))$. We distinguish between two cases depending on the value of $t_1$.
\begin{compactitem}
	\item If $t_1 = 0$, then by the assumption $\lambda_R(r,\Half(i_1,t_1)) = (r_1,\Half(o_1,t_1))$ there exists $r' \in R$ such that $r' \trans{i_1,o_1} r_1$, and so by condition 4. of the definition of $\Half$-bisimulation, there exists $(s_1,0) \in S \times \bbR$ such that $(s,t_1) \trans{i_1,o_1} (s_1,0)$. Hence, $\lambda_S(s, (i_1,t_1)) = (s_1,(o_1,t_1))$.	
	
	\item If $t_1 > 0$, then by the assumption $\lambda_R(r,\Half(i_1,t_1)) = (r_1,\Half(o_1,t_1))$ there exists $r' \in R$ such that $r \trans{\Half(t_1),\Half(t_1)} r' \trans{i_1,o_1} r_1$. By Lemma~\ref{lem:Half-bisimilar-time} (ii), there exists $(s',x')\in S \times \bbR$ such that $(s,0) \trans{t_1} (s',x')$ and $(s',x') \sim r'$. By condition 4. of the definition of $\Half$-bisimulation, we have that there exists a timed state $(s_1,0)$ such that $(s',x') \trans{i_1,o_1} (s_1,0)$. This implies that under input $(i_1,t_1)$ the TFSM $T$ produces the timed output word $(o_1,t_1)$, and thus we can conclude that $\lambda_S(s, (i_1,t_1)) = (s_1,(o_1,t_1))$.	
\end{compactitem}

\noindent Since our machines may be partial, we have that $\lambda_S(s,(i_1,t_1))$ and $\lambda_R(r,\Half(v))$ are not necessarily defined. However, the above argument also shows that $\lambda_S(s,(i_1,t_1))$ is defined if and only if $\lambda_R(r,\Half(v))$ is defined.

\medskip

\sloppy
To prove the inductive case, suppose $m > 1$, $v = (i_1,t_1)\ldots(i_m,t_m)$ and $w = (i_1,t_1)\ldots$ $(i_m,t_m)$. Now, let $v'=(i_1,t_1)\ldots(i_{m-1},t_{m-1})$ and $w' = (o_1,t_1)\ldots(o_{m-1},t_{m-1})$. By inductive hypothesis, we have that $\lambda_S(s,v') = (s_{m-1},w')$ if and only if $\lambda_R(r,\Half(v')) = (r_{m-1},\Half(w'))$ for some $(s_{m-1},0) \sim r_{m-1}$, and that $\lambda_S(s_{m-1},(i_m,t_m-t_{m-1})) = (s_m,(o_m,t_m-t_{m-1}))$ if and only if $\lambda_R(r_{m-1},\Half(i_m,t_m-t_{m-1})) = (r_m,\Half(o_m,t_m-t_{m-1}))$ for some $(s_m,0) \sim r_m$. This implies that $\lambda_S(s,v'(i_m,t_m)) = (s_m,w'(o_m,t_m))$ if and only if $\lambda_R(r,\Half(v)) = \lambda_R(r,\Half(v'(i_m,t_m))) = (r_m,\Half(w')\Half(o_m,t_m-t_{m-1})) = (r_m,\Half(w))$, and thus that the claim holds also for $m$.

To conclude the proof of the Lemma it is sufficient to recall that from the definition of behaviour we have that $B_T(v) = w$ if and only if $\lambda_S(s_0,v) = (s_m,w)$ for some state $s_m \in S$. From $(s_0,0) \sim r_0$ (hypothesis of the lemma) we can conclude that $\lambda_R(r_0,\Half(v)) = (r_m,\Half(w))$ and thus that $B_U(\Half(v)) = \Half(w) = \Half(B_T(v))$.
\end{proof}

\begin{theorem}\label{th:abstract-Half-bisim}
A TFSM with timeouts and timed guards $M$ is $\Half$-bisimilar to the abstract FSM $A_M$.
\end{theorem}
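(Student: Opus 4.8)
The plan is to exhibit an explicit $\Half$-bisimulation and then verify, one by one, the four conditions of Definition~\ref{def:Half-bisim}. The natural candidate is the region-membership relation: declare $(s,x) \sim (s',\langle n,n'\rangle)$ exactly when $s = s'$ and $x \in \langle n,n'\rangle$, where $\langle n,n'\rangle \in \bbI_N$ with $N = \max(M)$. The initial-state requirement $(s_0,0) \sim (s_0,[0,0])$ holds trivially since $0 \in [0,0]$, so everything reduces to checking conditions 1--4 for an arbitrary related pair $(s,x) \sim (s,\langle n,n'\rangle)$.

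For the input/output conditions 3 and 4 I would reuse the guard-inclusion argument already contained in the proof of Lemma~\ref{lem:discretization}. Suppose $(s,x) \trans{i,o}(s',0)$; then some $(s,i,\langle t,t'\rangle,o,s') \in \lambda_S$ has $x \in \langle t,t'\rangle$. Because the guard bounds are integers (or $\infty$) and $x$ lies in the $\bbI_N$-interval $\langle n,n'\rangle$, the whole interval is contained in the guard, $\langle n,n'\rangle \subseteq \langle t,t'\rangle$; hence the last clause of Definition~\ref{def:abstract-tfsm-all} gives $(s,\langle n,n'\rangle)\trans{i,o}(s',[0,0])$, and $(s',0)\sim(s',[0,0])$, establishing condition 3. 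Condition 4 is the mirror image: an abstract transition $(s,\langle n,n'\rangle)\trans{i,o}(s',[0,0])$ arises from a guard with $\langle n,n'\rangle \subseteq \langle t,t'\rangle$, so any $x \in \langle n,n'\rangle$ satisfies $x \in \langle t,t'\rangle$ and fires $(s,x)\trans{i,o}(s',0)$ in $M$, again landing in the related state $(s',[0,0])$.

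The substantive work, and the step I expect to be the main obstacle, is the pair of timing conditions 1 and 2, which must reconcile the two base kinds of timed transition of $M$ (an \emph{internal} step $(s,x)\trans{t}(s,x+t)$ when $x+t < \Delta_S(s)_{\downarrow\bbN}$ and a \emph{timeout} step $(s,x)\trans{t}(\Delta_S(s)_{\downarrow S},0)$ when $x+t = \Delta_S(s)_{\downarrow\bbN}$) with the four distinct $\Half$-transition clauses of $A_M$. The restriction to delays $0 < t < 1$ with $x \in \bbN$ or $x+t \in \bbN$ is precisely what forces a single $\Half$-step, and I would organize the verification by which boundary is crossed. From a point interval $[n,n]$ (so $x = n \in \bbN$), a delay into $(n,n+1)$ matches the clause $(s,[n,n])\trans{\Half,\Half}(s,(n,n+1))$, guarded by $n+1 \le \Delta_S(s)_{\downarrow\bbN}$, which holds because $x = n < \Delta_S(s)_{\downarrow\bbN}$. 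From an open interval $(n,n+1)$, a delay reaching the integer $x+t=n+1$ splits into the internal subcase $n+1 < \Delta_S(s)_{\downarrow\bbN}$ (matching the clause into $[n+1,n+1]$) and the timeout subcase $n+1 = \Delta_S(s)_{\downarrow\bbN}$ (matching the clause into $(\Delta_S(s)_{\downarrow S},[0,0])$); the unbounded region $(N,\infty)$ and its self-loop require the separate treatment available only when $\Delta_S(s)_{\downarrow\bbN} = \infty$.

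The delicate points inside this case split are twofold. First, one must rule out the possibility $x+t > \Delta_S(s)_{\downarrow\bbN}$, which cannot occur because $\Delta_S(s)_{\downarrow\bbN}$ is an integer strictly above $x$ while $0 < t < 1$. Second, for condition 2 one must check that for \emph{every} admissible $t$ the corresponding timed transition lands in the region named by $r'$: for a point-to-open $\Half$-step this means all $0<t<1$ work (every target lies in $(n,n+1)$), whereas for an open-to-point or open-to-timeout step it is the single value $t = (n+1) - x$ singled out by the requirement $x+t \in \bbN$. Once the case analysis is complete, each condition collapses to a one-line check that the target timed state lies in the target region, and the theorem follows since the relation $\sim$ restricted to the initial states satisfies $(s_0,0)\sim(s_0,[0,0])$ as demanded by Definition~\ref{def:Half-bisim}.
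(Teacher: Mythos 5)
Your proposal is correct and takes exactly the same route as the paper: the paper's proof consists of exhibiting the identical region-membership relation $\sim \;=\; \{((s,x),(s,\langle n,n'\rangle))\mid x \in \langle n,n'\rangle\}$ and asserting it is a $\Half$-bisimulation, leaving the case analysis implicit. Your verification of conditions 1--4, including the boundary case split for the timing conditions and the guard-inclusion argument borrowed from Lemma~\ref{lem:discretization}, fills in precisely the details the paper omits.
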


\begin{proof}
The relation $\sim = \{((s,x),(s,\langle n,n'\rangle))\mid x \in \langle n,n'\rangle \}$ is a $\Half$-bisimulation for $M$ and $A_M$.
\end{proof}

We can use the above theorem to solve the equivalence problem for TFSM with timed guards.

\begin{corollary}\label{th:tfsm-to-equiv}
Let $M$ and $M'$ be two TFSM with timeouts and timed guards.
Then $M$ and $M'$ are equivalent if and only if the two abstract FSM $A_M$ and $A_{M'}$ are equivalent.
\end{corollary}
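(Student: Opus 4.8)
The plan is to reduce everything to the single identity produced by the bisimulation machinery and then argue that $\Half$ is ``injective enough'' on the outputs that matter. By Theorem~\ref{th:abstract-Half-bisim} the machine $M$ is $\Half$-bisimilar to $A_M$ via $\sim\,=\{((s,x),(s,\langle n,n'\rangle))\mid x\in\langle n,n'\rangle\}$, with $(s_0,0)\sim(s_0,[0,0])$; hence Lemma~\ref{lem:Half-bisimilar-then-equiv} gives, for every timed input word $v$, the identity $\Half(B_M(v))=B_{A_M}(\Half(v))$, and likewise $\Half(B_{M'}(v))=B_{A_{M'}}(\Half(v))$ for $M'$. These are equalities of partial functions: one side is defined exactly when the other is. These two identities are the only facts about the abstraction I will use.

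For the ($\Leftarrow$) direction, assume $B_{A_M}=B_{A_{M'}}$. Evaluating both abstract behaviours on words of the form $\Half(v)$ and substituting the two identities yields $\Half(B_M(v))=\Half(B_{M'}(v))$ for every timed input word $v$. It remains to strip the outer $\Half$. The key observation is that, by the instantaneous-output convention, whenever $B_M(v)$ and $B_{M'}(v)$ are defined they are output words carrying exactly the timestamps of $v$; consequently both are abstracted using the very same interleaving blocks of $\Half$ symbols (those blocks are fixed by the delays of $v$ alone, via Definition~\ref{def:Half-abstract-timed-word}), and the output symbols are precisely the letters sitting between consecutive blocks. Equal abstractions with identical block structure therefore force equal output symbols, so $B_M(v)=B_{M'}(v)$; together with the defined-iff-defined clause this gives $B_M=B_{M'}$, i.e.\ $M$ and $M'$ are equivalent.

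For the ($\Rightarrow$) direction, assume $B_M=B_{M'}$. The same substitution now gives $B_{A_M}(\Half(v))=B_{A_{M'}}(\Half(v))$ for every timed input word $v$, i.e.\ the abstract behaviours agree on the image of $\Half$. The subtlety --- and the main obstacle --- is that equivalence of untimed FSMs requires $B_{A_M}=B_{A_{M'}}$ on \emph{all} of $(I\cup\{\Half\})^*$, whereas the image of $\Half$ consists only of words of the shape $\Half^{k_1}i_1\cdots\Half^{k_m}i_m$ ending in an input letter (or the empty word); words with a trailing block of $\Half$'s are never hit. I would close this gap by showing the behaviour on trailing $\Half$'s is forced. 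Every word of $(I\cup\{\Half\})^*$ can be written uniquely as $\hat w\,\Half^{k}$ with $\hat w=\Half(v)$ in the image of $\Half$ and $k\ge 0$. Since $M$ is deterministic, so is $A_M$ (by Lemma~\ref{lem:discretization}), and from any state of the form $(s,[0,0])$ --- which is exactly where reading such a $\hat w$ lands --- the chain of $\Half$-transitions in $A_M$ is \emph{infinite}: inspecting Definition~\ref{def:abstract-tfsm-all}, from $[0,0]$ the clock interval either climbs to the next interval, or triggers the timeout to some $(s',[0,0])$, or (for an infinite timeout) ascends to $(N,\infty)$ and self-loops, so a next $\Half$-step is always available and always outputs $\Half$. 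Hence $B_{A_M}(\hat w\,\Half^{k})=B_{A_M}(\hat w)\,\Half^{k}$, defined exactly when $B_{A_M}(\hat w)$ is, and identically for $A_{M'}$. Combining this with the agreement on the image of $\Half$ yields $B_{A_M}(\hat w\,\Half^{k})=B_{A_{M'}}(\hat w\,\Half^{k})$ for every word, so $B_{A_M}=B_{A_{M'}}$ and the abstract machines are equivalent.

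I expect the routine parts (the two substitutions and the determinism of $A_M$) to be immediate, and the genuinely delicate point to be the trailing-$\Half$ words in the ($\Rightarrow$) direction: one must verify that $\Half$ misses $(I\cup\{\Half\})^*$ only through these trailing blocks, and that the deterministic $\Half$-chain of the abstract FSM never dies, so that behaviour on such words is rigidly determined by behaviour on the image of $\Half$. Everything else is bookkeeping built on top of the identity $\Half\circ B_M=B_{A_M}\circ\Half$.
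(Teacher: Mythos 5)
Your proof is correct and follows the same route as the paper, whose own proof simply states that the corollary is a direct consequence of Theorem~\ref{th:abstract-Half-bisim} and Lemma~\ref{lem:Half-bisimilar-then-equiv}. In fact you supply two details the paper leaves implicit --- that equality of the $\Half$-abstracted outputs forces equality of the timed outputs themselves (since both carry exactly the timestamps of $v$, hence identical $\Half$-block structure), and that agreement of $B_{A_M}$ and $B_{A_{M'}}$ on the image of $\Half$ extends to all of $(I\cup\{\Half\})^*$ because every word decomposes as $\Half(v)\,\Half^{k}$ and the deterministic $\Half$-chain of an abstract FSM never dies --- and both points are genuinely needed and correctly handled.
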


\begin{proof}
The claim is a direct consequence of Theorem~\ref{th:abstract-Half-bisim} and Lemma~\ref{lem:Half-bisimilar-then-equiv}. 
\end{proof}


\section{From untimed FSMs to TFSMs}\label{sec:untimed2timed}

In the previous section we have shown how to build an abstract untimed FSM that represents the behaviour of a TFSM, by means of appropriate notions of bisimulation and of abstraction of timed words. In this section we study the conditions under which the opposite transformation is possible: we take an untimed FSM that accepts and produces words from input and output alphabets that include the special symbol $\Half$, and we show how to build an equivalent TFSM with timeouts and timed guards, under the same notion of abstraction of timed words.

Now, let $I$ and $O$ be, respectively, the input and output alphabets of our machines. We are interested in studying untimed FSMs that accept words in $(I \cup \{\Half\})^*$ and produce words in $(O \cup \{\Half\})^*$. Clearly, not all untimed FSMs represent valid timed behaviours. In particular, since in our TFSMs model outputs are instantaneously produced when an input is received, and since a TFSM cannot stop the advancing of time, we have that a deterministic untimed FSM $U = (R, I \cup \{\Half\}, O \cup \{\Half\}, \lambda_R, r_0)$ can be transformed into a TFSM only if every state $r$ of $U$ respects the following two conditions:
\begin{compactenum}
	\item $\lambda_R(r, \Half)$ is defined and such that $\lambda_R(r, \Half) = (r', \Half)$ for some $r' \in R$ (when the input $\Half$ is received, the FSM should produce the output $\Half$);
	\item for every input $i \in I$, if $\lambda(r, i)$ is defined then $\lambda(r, i) = (r', o)$ for some output $o \in O$ and state $r' \in R$ (when an input from $I$ is received, the FSM produces an output from $O$).
\end{compactenum}

\noindent We call any untimed FSM that respects the above two conditions \emph{time progressive}.

In the following we prove that every \emph{deterministic time progressive} FSM can be transformed into an equivalent TFSM with timeouts and timed guards.
Since we cannot directly compare the behavior of an untimed FSM with the behavior of a timed FSM, we will use the notion of $\Half$-abstraction of a timed word (Definition~\ref{def:Half-abstract-timed-word}) to compare timed and untimed machines.

\begin{definition}\label{def:Half-refinement-FSM}
Given a deterministic and time progressive FSM $U = (R, I \cup \{\Half\}, O \cup \{\Half\}, \lambda_R,$ $ r_0)$, and a TFSM with timed guards and timeouts  $T = (S, I, O, \lambda_S, s_0, \Delta_S)$, we say that \emph{$T$ refines $U$} if and only if for every timed input word $v = (i_1,t_1)\dots(i_m,t_m)$ we have that $B_U(\Half(v)) = \Half(B_T(v))$.
\end{definition}

The intuition behind the construction is the following. Since we start from a deterministic and time progressive FSM $U$, from every state of $U$ there exists exactly one transition with input $\Half$ (and output $\Half$). Hence, given a state $s$ we can build the (infinite) ``delay run''
$$
\rho_{\Half}^s = s \trans{\Half,\Half} s_1 \trans{\Half,\Half} s_2 \trans{\Half,\Half} \dots
$$

\noindent Since the number of states of $U$ is finite, we have that the delay run is ``lasso shaped'', namely, that it consists of a prefix $s \trans{\Half,\Half} \dots \trans{\Half,\Half} s_p$ followed by the infinite repetition of a loop $s_p \trans{\Half,\Half} \dots \trans{\Half,\Half} s_p$.

The refined TFSM $T$ will have the same set of states of $U$. Then, for every state $s$ the delay run is computed, and the transitions and timeouts are defined as follows:
\begin{compactitem}
	\item every $I/O$ transition leaving a state in the prefix is replaced with a timed transition from $s$ with an appropriate timed guard;
	\item a timeout corresponding to the length of the prefix forces $T$ to switch from $s$ to a state in the loop.
\end{compactitem}

\medskip

\begin{algorithm}[tbp] 
\caption{Transform a FSM into a TFSM with timeouts and timed guards}
\label{alg:main-fsm2tfsm}
\begin{algorithmic}[1]
\Require{A time progressive and deterministic FSM $U = (S, I \cup \{\Half\}, O \cup \{\Half\}, \lambda_U, s_0)$}
\Ensure{A TFSM $T =(S, I, O, \lambda_T, s_0, \Delta_T)$ that refines $U$}
\Function{Refine}{$U$}
	\State{$\lambda_T \gets \emptyset$}
	\State{$\Delta_T \gets \emptyset$}
	\State{$T \gets (S, I, O, \lambda_T, s_0, \Delta_T)$}
	\ForAll{$s \in S$}
		\State{$\Call{AddTimedTrans}{s, U, T}$}
	\EndFor
	\State{\Return{T}}
\EndFunction
\end{algorithmic}
\end{algorithm}

Algorithms \ref{alg:main-fsm2tfsm} and \ref{alg:timed-transitions} describe the above procedure in detail. To simplify the code, we will unfold the final loop once, and put the timeout in correspondence to the second occurrence of $s_p$ in the delay run. Moreover, since $U$ is assumed to be deterministic, we consider the transition relation as a partial function $\lambda_U: S \times I \cup\{\Half\} \mapsto S\times O \cup \{\Half\}$ returning the next state and the output. 

\begin{algorithm}[tbp] 
\caption{Add timed transitions to a state $s$}
\label{alg:timed-transitions}
\begin{algorithmic}[1]
\Function{AddTimedTrans}{$s, U, T$}
	\ForAll{$r \in S$} 
		{$\Call{Marked}{r} \gets False$}
	\EndFor
	\State{$r \gets s$}
	\State{$g \gets [0,0]$}
	\While{\textbf{not} \Call{Marked}{$r$}}\label{ttrans:while}
		\State{$\Call{Marked}{r} \gets True$}
		\ForAll{$i \in I$ \textbf{such that} $i\neq \Half$ and $\lambda_U(r,i)$ is defined}\label{ttrans:forinputs}
			\State{$(r',o) \gets \lambda_U(r,i)$}
			\State{\textbf{add} $(s, i, g, o, r')$ \textbf{to} $\lambda_T$}
		\EndFor\label{ttrans:endforinputs}
		\State{$r \gets \lambda_U(r, \Half)\downarrow_S$}\label{ttrans:nextr}
		\If{$g = [n,n]$}\label{ttrans:ifg}
			{$g \gets (n,n+1)$}
		\ElsIf{$g = (n,n+1)$} {$g \gets [n+1,n+1]$}
			
		\EndIf\label{ttrans:endifg}
	\EndWhile\label{ttrans:endwhile}
	\If{$g = [n,n]$}\label{ttrans:iftimeout}	
		{$\Delta_T(s) = (r,n)$}
		\Comment{prefix of even length: set the timeout and return}
	\ElsIf{$g = (n,n+1)$} \label{ttrans:elsiftimeout}
		\Comment{prefix of odd length: unfold it one more step}
		\ForAll{$i \in I$ \textbf{such that} $i\neq \Half$}
			\State{$(r',o) \gets \lambda_U(r,i)$}
			\State{\textbf{add} $(s, i, g, o, r')$ \textbf{to} $\lambda_T$}
		\EndFor
		\State{$r \gets \lambda_U(r, \Half)\downarrow_S$}
		\State{$\Delta_T(s) = (r,n+1)$}
	\EndIf\label{ttrans:endiftimeout}
\EndFunction
\end{algorithmic}
\end{algorithm}

We prove the correctness of our construction by showing that the TFSM $T$ obtained from Algorithm~\ref{alg:main-fsm2tfsm} is $\Half$-bisimilar to $U$. Hence, by Lemma~\ref{lem:Half-bisimilar-then-equiv}, we can immediately conclude that $T$ is a refinement of $U$.

\begin{theorem}\label{th:timed-refinement-bisim}
Given a time progressive and deterministic FSM $U = (S, I \cup \{\Half\}, O \cup \{\Half\}, \lambda_U,$ $ s_0)$, Algorithm~\ref{alg:main-fsm2tfsm} builds a TFSM with timeouts and timed guards $T =(S, I, O, \lambda_T,$ $s_0, \Delta_T)$ for which there exists a $\Half$-bisimulation $\sim$ such that $(s_0,0) \sim s_0$.
\end{theorem}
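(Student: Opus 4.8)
The plan is to exhibit an explicit relation $\sim$ and verify the four conditions of Definition~\ref{def:Half-bisim}; the required initial condition $(s_0,0)\sim s_0$ will then follow for free. For a state $s$ of $U$ and $k\in\bbN$, write $\delta(s,k)$ for the state reached from $s$ after $k$ consecutive $\Half$-transitions in $U$. Since $U$ is deterministic and time progressive, $\lambda_U(\cdot,\Half)$ is total, so $\delta(s,k)$ is defined for every $k$ and the delay run is exactly $\rho_\Half^s = \delta(s,0)\trans{\Half,\Half}\delta(s,1)\trans{\Half,\Half}\dots$ with $\delta(s,0)=s$. I would then define
\[
\sim \;=\; \bigl\{\, ((s,x),\,\delta(s,|\Half(x)|)) \;:\; s\in S,\ 0\le x<\Delta_T(s)_{\downarrow\bbN} \,\bigr\},
\]
so that a valid timed state $(s,x)$ of $T$ is related to the state reached along the delay run of $s$ after as many $\Half$-steps as there are $\Half$ symbols in $\Half(x)$. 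Because $|\Half(0)|=0$ and $\delta(s_0,0)=s_0$, this gives $(s_0,0)\sim s_0$.

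The engine of the argument is the bookkeeping of the guard variable $g$ in Algorithm~\ref{alg:timed-transitions}, which I would isolate as a preliminary observation. A straightforward induction shows that the iteration of \textsc{AddTimedTrans} processing the delay-run state $\delta(s,k)$ uses guard $g=[k/2,k/2]$ when $k$ is even and $g=(\floor{k/2},\floor{k/2}+1)$ when $k$ is odd, and that in both cases $g=\{x\ge 0 : |\Half(x)|=k\}$; moreover the transitions added there copy exactly the $I/O$ transitions enabled at $\delta(s,k)$ in $U$. With this in hand, conditions~3 and~4 are almost immediate. If $(s,x)\trans{i,o}(s',0)$ in $T$, the enabling transition $(s,i,g,o,s')$ was added at the unique step $k=|\Half(x)|$ with $x\in g$, whence $\lambda_U(\delta(s,k),i)=(s',o)$, so $r=\delta(s,k)\trans{i,o}s'$ in $U$ with $(s',0)\sim\delta(s',0)=s'$; condition~4 simply reverses this reading of the construction.

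Conditions~1 and~2 are the heart of the proof. Given $(s,x)\sim r$ with $r=\delta(s,|\Half(x)|)$, a delay $0<t<1$ subject to $x\in\bbN$ or $x+t\in\bbN$ increases $|\Half(x)|$ by exactly one, so the unique matching move in $U$ is $r\trans{\Half,\Half}\delta(s,|\Half(x)|+1)$. The verification splits on whether the delay keeps $T$ in $s$ (when $x+t<\Delta_T(s)_{\downarrow\bbN}$, giving $s'=s$ and $x'=x+t$, related to $\delta(s,|\Half(x)|+1)$) or drives the clock to the timeout (when $x+t=\Delta_T(s)_{\downarrow\bbN}$, so the timeout fires and $T$ jumps to $\Delta_T(s)_{\downarrow S}$ with the clock reset to $0$). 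In the second case one must show that $\delta(s,|\Half(x)|+1)$ coincides with the timeout target $\Delta_T(s)_{\downarrow S}$ and that the relation is restored by $(\Delta_T(s)_{\downarrow S},0)\sim\delta(\Delta_T(s)_{\downarrow S},0)$.

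I expect the main obstacle to be precisely this reconciliation of the timeout with the delay run, together with the parity correction performed by the \textbf{elsif} branch of Algorithm~\ref{alg:timed-transitions}. Because a timeout can fire only at an integer clock value, the loop of the lasso must be entered at an \emph{even} step of the delay run (where the guard is a point interval $[n,n]$); the ``unfold one more step'' branch is exactly what enforces this when the loop is first detected at an odd step. I would therefore need to confirm (a) that the timeout value set by the algorithm equals the integer $k/2$ of the even step $k$ at which the loop is entered, and (b) that the timeout target is the delay-run state $\delta(s,k)$ reached at that step. Once this alignment is established, the even- and odd-prefix cases of conditions~1 and~2 close uniformly by the single-integer-crossing argument above, completing the verification that $\sim$ is a $\Half$-bisimulation with $(s_0,0)\sim s_0$, which is the assertion of the theorem.
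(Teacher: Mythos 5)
Your proposal is correct and takes essentially the same route as the paper: the relation you define via $\delta(s,|\Half(x)|)$ is exactly the paper's $\hat{\lambda}_U(s,\Half(x))\downarrow_S$, and your guard-bookkeeping observation (that at step $k$ of the delay run the guard satisfies $g=\{x : |\Half(x)|=k\}$) is precisely the loop invariant the paper maintains for \textsc{AddTimedTrans}. The points you flag as needing confirmation --- that the timeout value and target align with the even step at which the lasso's loop is entered, with the \textbf{elsif} branch enforcing the parity --- are exactly the final cases the paper's proof discharges, and your reasons for why they hold are the right ones.
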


\begin{proof}
Let $U = (S, I \cup \{\Half\}, O \cup \{\Half\}, \lambda_U, s_0)$ be a time progressive and deterministic FSM, and let $T =(S, I, O, \lambda_T, s_0, \Delta_T)$ be the TFSM built by Algorithm~\ref{alg:main-fsm2tfsm}. We define the following relation between states of $T$ and states of $U$:
\begin{equation}\label{eq:u2t_bisim}
\sim = \{((s,x), r) \mid r = \hat{\lambda}_U(s, \Half(x))\downarrow_S \}
\end{equation}

\noindent where $\hat{\lambda}_U: S \times (I \cup \{\Half\})^* \mapsto S \times (O \cup \{\Half\})^*$ is the usual extension of the transition function $\lambda_U$ to input words. 

We show that $\sim$ is indeed a $\Half$-bisimulation between $T$ and $U$ by proving that the function \Call{AddTimedTransition}{} (Algorithm~\ref{alg:timed-transitions}) respects the following invariant:
\begin{itemize}
	\item[\bf INV] $(s,x) \sim r$ for all $x \in g$, and all conditions of Definition~\ref{def:Half-bisim} are respected by the transitions in $\lambda_T$
	\end{itemize}

Before entering into the \textbf{while} loop, \Call{AddTimedTransition}{} sets $r = s$ and $g = [0,0]$. Since $\Half(0) = \varepsilon$, we have that $(s,0) \sim s$, and since $\lambda_T$ is empty, Definition~\ref{def:Half-bisim} is trivially respected. 

Consider now a generic iteration of the \textbf{while} loop (lines \ref{ttrans:while}--\ref{ttrans:endwhile}). By the invariant, we have that $(s,x) \sim r$ for all $x \in g$. The \textbf{for} loop (lines \ref{ttrans:forinputs}--\ref{ttrans:endforinputs}) iterates through all transitions of $U$ activated by an actual input $i \in I$, adding a transition $(s,i,o,g,r')$ to $\lambda_T$ for every transition $(r,i,o,r') \in \lambda_U$. Hence, for every $x \in g$ we have that $(s,x) \trans{i,o} (r',0)$ and $r \trans{i,o} r'$. Since $(s,x) \sim r$ and $(r',0) \sim r'$, we have that conditions 3 and 4 of Definition~\ref{def:Half-bisim} are respected. 
After updating $\lambda_T$, lines \ref{ttrans:nextr}--\ref{ttrans:endifg} update the value of $r$ and $g$. Let us call $r_{old}$ and $g_{old}$ the values of $r$ and $g$ before the update. Then, $r$ is set to the $\Half$-successor of $r_{old}$ and $g$ is updated to the ``next interval'' as follows:
\begin{compactitem}
	\item if $g_{old} = [n,n]$ then $g = (n,n+1)$;
	\item if $g_{old} = (n,n+1)$ then $g = [n+1,n+1]$.
\end{compactitem}
We consider the two cases separately. If $g = [n,n]$ then the only possible state $(s,x)$ such that $x \in [n,n]$ is $(s,n)$. Moreover, by the definition of $\sim$, since $(s,n) \sim r_{old}$ we have that $r_{old} = \hat{\lambda}_U(s, \Half(n))\downarrow_S$, with $\Half(n) = \Half^{2n}$. 
Since line \ref{ttrans:nextr} updates $r$ to $\lambda_U(r_{old},\Half)$, and since $\Half(x+t) = \Half^{2n+1}$ for every $0 < t < 1$, we have that $r = \hat{\lambda}_U(s,\Half(x+t))\downarrow_S$. Hence, since $(s,n) \trans{t} (s,n+t)$, $(s,n+t) \sim r$ and $r_{old} \trans{\Half,\Half} r$ we have that conditions 1 and 2 of Definition~\ref{def:Half-bisim} are respected.
By a similar argument, if $g = (n,n+1)$ we can show that $(s,x) \trans{t} (s,x+t)$, $(s,x+t) \sim r$ and $r_{old} \trans{\Half,\Half} r$ for every $x$ and $t$ such that $0 < t < 1$ and $x+t = n+1$, respecting conditions 1 and 2 of Definition~\ref{def:Half-bisim} also in this case. 
Hence, every iteration of the \textbf{while} loop respects the invariant.

The loop terminates when $r$ is a \Call{Marked}{} state, that is, when it reaches the first repetition of a state in the delay run from $s$. Lines \ref{ttrans:iftimeout}--\ref{ttrans:endiftimeout} take care of setting appropriately the timeout at state $s$. Two different situations may arise: either $g = [n,n]$ or $g = (n,n+1)$ for some $n \in \bbN$. In the former case, the state $r$ is repeated after an even number of transitions, which corresponds to an integer time delay. Hence, the timeout at $s$ is set to $\Delta_S(s) = (r, n)$. Consider now the predecessor $r_{pred}$ of $r$ in the delay run. By the invariant, we have that $(s, x) \sim r_{pred}$ for every $x \in (n-1,n)$. Hence, we have that $(s, x) \trans{n - x} (r,0)$ for every $n-1 < x < n$, $r_{pred} \trans{\Half,\Half} r$, $(s, x) \sim r_{pred}$ and $(r,0) \sim r$, respecting conditions 1 and 2 of Definition~\ref{def:Half-bisim}.
In the latter case ($g = (n,n+1)$), $r$ is repeated after an odd number of transitions. Since the timeout at $s$ must be an integer value, lines \ref{ttrans:elsiftimeout}--\ref{ttrans:endiftimeout} repeat the construction of the \textbf{while} loop one more time and then update $r$ to a state that corresponds to precisely $n+1$ time units before setting the timeout. As in the previous case, we can prove that the invariant is respected.

To conclude the proof we observe that Algoritm~\ref{alg:main-fsm2tfsm} executes \Call{AddTimedTransition}{} on every state $s \in S$. Hence, the final TFSM $T$ is in relation $\sim$ with $U$. Since $\sim$ respects all conditions of Definition~\ref{def:Half-bisim}, we have that it is a $\Half$-bisimulation between $T$ and $U$ such that $(s_0,0) \sim s_0$.
\end{proof}

\begin{corollary}\label{cor:timed-refinement}
Given a time progressive and deterministic FSM $U = (S, I \cup \{\Half\}, O \cup \{\Half\}, \lambda_U,$ $ s_0)$, Algorithm~\ref{alg:main-fsm2tfsm} builds a TFSM with timeouts and timed guards \linebreak $T =(S, I, O, \lambda_T, s_0, \Delta_T)$ that refines $U$.
\end{corollary}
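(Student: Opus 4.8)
The plan is to obtain the refinement property as an immediate consequence of the two results just established, namely Theorem~\ref{th:timed-refinement-bisim} and Lemma~\ref{lem:Half-bisimilar-then-equiv}. All of the genuine work has already been carried out inside Theorem~\ref{th:timed-refinement-bisim}, where the invariant argument over the \textbf{while} loop of Algorithm~\ref{alg:timed-transitions} shows that the relation $\sim$ defined in~\eqref{eq:u2t_bisim} is a $\Half$-bisimulation; the corollary is then pure plumbing between that bisimulation and the definition of refinement, so I do not expect any real obstacle here.

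Concretely, first I would invoke Theorem~\ref{th:timed-refinement-bisim} on the given time progressive and deterministic FSM $U$. This yields the TFSM $T = (S, I, O, \lambda_T, s_0, \Delta_T)$ produced by Algorithm~\ref{alg:main-fsm2tfsm}, together with a $\Half$-bisimulation $\sim$ satisfying $(s_0,0) \sim s_0$. Note that the initial state of $T$ and the initial state of $U$ coincide (both are $s_0$), since the construction keeps the same state set; this is the only point of notational compatibility worth verifying, and it makes Lemma~\ref{lem:Half-bisimilar-then-equiv} applicable verbatim.

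Second, I would feed this bisimulation into Lemma~\ref{lem:Half-bisimilar-then-equiv}, whose hypothesis is exactly the existence of a $\Half$-bisimulation relating the initial states $(s_0,0) \sim s_0$. The lemma then delivers, for every timed input word $v = (i_1,t_1)\dots(i_m,t_m)$, the identity $\Half(B_T(v)) = B_U(\Half(v))$.

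Finally, I would observe that this identity is precisely the condition of Definition~\ref{def:Half-refinement-FSM} for ``$T$ refines $U$'': the equation $B_U(\Half(v)) = \Half(B_T(v))$ is the very same identity read in the opposite direction, and it is required to hold for every timed input word $v$, which is exactly what the previous step provides. Hence $T$ refines $U$, concluding the proof. As anticipated, the hard part is entirely contained in Theorem~\ref{th:timed-refinement-bisim}, and the corollary merely chains that bisimulation through Lemma~\ref{lem:Half-bisimilar-then-equiv} into the refinement definition.
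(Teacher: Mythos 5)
Your proposal is correct and matches the paper's own (implicit) argument exactly: the paper states right before Theorem~\ref{th:timed-refinement-bisim} that the refinement property follows immediately from that theorem via Lemma~\ref{lem:Half-bisimilar-then-equiv}, which is precisely the chain you describe. Nothing to add.
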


\begin{figure}[tbp]
	\centering
	\begin{tikzpicture}[font=\footnotesize,yscale=0.8]
		\node[draw,circle] (q0)	 at (0,0)	{$q_0$};
		\node[draw,circle]	(q2) at (4,0)	{$q_2$};
		\node[draw,circle]	(q5) at (2,-4)	{$q_5$};

		\draw[->] (-0.6,-0.6) -- (q0);
		\path[->] (q0) edge[loop above] node        {$[0,1):i/o_1$} (q0);
		\path[->] (q0) edge[loop left] node         {$(2,3):i/o_1$} (q0);
		\path[->] (q0) edge[bend left=10]  node[above] {$[1,2]:i/o_2$} (q2);
		\path[->] (q0) edge[bend left=10]  node[above right] {$t=3$} (q5);

		\path[->] (q2) edge[loop above] node        {$[0,1]:i/o_2$} (q1);
		\path[->] (q2) edge[bend left=10]  node[below] {$(1,2):i/o_1$} (q0);
		\path[->] (q2) edge						  node[below right] {$t=2$} (q5);

		\path[->] (q5) edge[loop below] node        {$t=1$} (q5);
		\path[->] (q5) edge[bend left=10]  node[below left] {$[0,1):i/o_1$} (q0);
	\end{tikzpicture}
	\caption{Example of application of Algorithm~\ref{alg:main-fsm2tfsm}.}
  \label{fig:fsm2tfsm}
\end{figure}
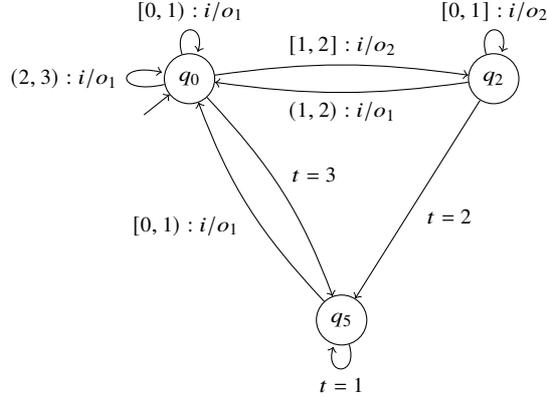

Figure~\ref{fig:fsm2tfsm} shows the TFSM with timeouts and timed guards that can be obtained by applying Algorithm~\ref{alg:main-fsm2tfsm} to the untimed FSM of Figure~\ref{fig:tfsm-all-ab}, where the states have been renamed as follows:
\begin{align*}
	(s_0,[0,0]) & = q_0 	&		(s_0,(0,1)) & = q_1			&			(s_1,[0,0]) & = q_2		\\
	(s_1,(0,1)) & = q_3 	&		(s_1,[1,1]) & = q_4			&			(s_1,(1,\infty)) & = q_5		 
\end{align*}

\noindent \sloppy In the picture, transitions with adjacent guards have been merged: for instance, the application of the algorithm creates the transitions $(q_0, i, o_1, [0,0], q_0)$ and the transition $(q_0, i, o_1, (0,1), q_0)$ that are merged into the unique transition $(q_0, i, o_1, [0,1), q_0)$ in the picture. 
The picture includes only the states that are reachable from the initial state $q_0$.
This shows that in the final result only the three states $q_0$, $q_2$ and $q_5$ are relevant: the other states have been replaced by either timed guards or timeouts. 

To better understand how Algorihm~\ref{alg:main-fsm2tfsm} works, let us review the application of function \Call{AddTimedTrans}{} (Algorithm~\ref{alg:timed-transitions}) to the initial state $q_0$ (state $(s_0,[0,0])$ in the picture) of the untimed FSM $A_M$ of Figure~\ref{fig:tfsm-all-ab}. The procedure starts by unmarking all states of $A_M$ and by initialising the current state $r$ to $q_0$ and the current guard $g$ to $[0,0]$. Then the \textbf{while} loop of lines \ref{ttrans:while}--\ref{ttrans:endwhile} follows the sequence of $\Half/\Half$ transitions in $A_M$, marking the states it reaches, until a previously marked state is found. At lines 7-9, for every I/O transition exiting the current state, a corresponding timed transition labelled with the current value of $g$ is added to the TFSM. Then the current state $r$ is updated to the next state in the sequence of $\Half/\Half$ transitions and $g$ is increased following the sequence $[0,0], (0,1), [1,1], (1,2), \dots$. 
In this example, the first iteration of the \textbf{while} loop considers all I/O transitions exiting from the state $q_0$ of $A_M$, namely the transition $q_0 \trans{i/o_1} q_0$, and adds the transition $q_0 \trans{[0,0]:i/o_1} q_0$ to the TFSM (the initial value of $g$ is indeed $[0,0]$). Then $r$ is updated to $q_1$, $g$ to $(0,1)$ and the second iteration is started. The transition $q_1 \trans{i/o_1} q_0$ corresponds to the transition $q_0 \trans{(0,1):i/o_1} q_0$ in the TFSM. Notice that the starting state of the timed transition is still $q_0$. The $\Half/\Half$ transition between $q_0$ and $q_1$ of $A_M$ models the fact that the machine waits for a time included in the interval $(0,1)$ before accepting an input. This situation is modelled in the TFSM by adding the guard $(0,1)$ to the transition while keeping $q_0$ as starting state. Then the loop continues by adding the following transitions to the TFSM:
\begin{align*}
 q_0 \trans{[1,1]:i/o_2} q_2 & & q_0 \trans{(1,2):i/o_2} q_2 & & 
 q_0 \trans{[2,2]:i/o_2} q_2 & & q_0 \trans{(2,3):i/o_1} q_0 
\end{align*}

\noindent At this point, the current state $r$ of $A_M$ is $q_5$ (i.e., $(s_1,(1,\infty))$) and the guard $g$ is $(2,3)$. Because of the self loop on $\Half/\Half$ of $A_M$ in state $q_5$, at the end of the loop $r$ does not change and $g$ is updated to $[3,3]$: a previously marked state is reached and the loop terminates. Lines \ref{ttrans:iftimeout}--\ref{ttrans:endiftimeout} of \Call{AddTimedTrans}{} set the timeout at state $q_0$ to $(q_5, 3)$, terminating the function call. The value of the timeout is set to $3$ because the first marked state is reached after $6$ $\Half/\Half$ transitions, which corresponds to $3$ time units.
A subsequent call to \Call{AddTimedTrans}{} on state $q_5$ will set the timeout at state $q_5$ to $(q_5, 1)$ (i.e., the self-loop on $t=1$ depicted in the figure), to model the fact that in the untimed FSM $A_M$ there is a self-loop on $\Half/\Half$ at state $q_5$. In this way, the sequence of $\Half/\Half$ transitions $q_0 \trans{\Half/\Half} q_1 \trans{\Half/\Half} q_2 \trans{\Half/\Half} q_3 \trans{\Half/\Half} q_4 \trans{\Half/\Half} q_5 \trans{\Half/\Half} q_5 \trans{\Half/\Half} q_5 \trans{\Half/\Half} q_5 \trans{\Half/\Half} q_5 \trans{\Half/\Half} \dots$ of $A_M$ is replaced by the sequence of timeout transitions $q_0 \trans{3} q_5 \trans{1} q_5 \trans{1} \dots$. In both cases the machines can wait in $q_5$ forever, if no input is received in the first $3$ time units.
The application of \Call{AddTimedTrans}{} to the other states of $A_M$ builds the rest of the TFSM.

\begin{figure}
\centering
\begin{tikzpicture}[scale=1.2,font=\footnotesize,auto]
		
		\node[draw,circle,minimum size=20pt,text width=4ex,text centered] (q000)	 at (0,0)	{$q_0$ \\ $[0,0]$};
		\node[draw,circle,minimum size=20pt,text width=4ex,text centered] (q001)	 at (4,0)	{$q_0$ \\ $(0,1)$};
		\node[draw,circle,minimum size=20pt,text width=4ex,text centered] (q011)	 at (8,0)	{$q_0$ \\ $[1,1]$};
		\node[draw,circle,minimum size=20pt,text width=4ex,text centered] (q012)	 at (8,-3.5)	{$q_0$ \\ $(1,2)$};
		\node[draw,circle,minimum size=20pt,text width=4ex,text centered] (q022)	 at (5,-3.5)	{$q_0$ \\ $[2,2]$};
		\node[draw,circle,minimum size=20pt,text width=4ex,text centered] (q023)	 at (2.5,-3.5)	{$q_0$ \\ $(2,3)$};

		\node[draw,circle,minimum size=20pt,text width=4ex,text centered] (q200)	 at (7,-1.5)	{$q_2$ \\ $[0,0]$};
		\node[draw,circle,minimum size=20pt,text width=4ex,text centered] (q201)	 at (5.5,-1.5)	{$q_2$ \\ $(0,1)$};
		\node[draw,circle,minimum size=20pt,text width=4ex,text centered] (q211)	 at (4,-1.5)	{$q_2$ \\ $[1,1]$};
		\node[draw,circle,minimum size=20pt,text width=4ex,text centered] (q212)	 at (2.5,-1.5)	{$q_2$ \\ $(1,2)$};

		\node[draw,circle,minimum size=20pt,text width=4ex,text centered] (q500)	 at (1.5,-2.5)	{$q_5$ \\ $[0,0]$};
		\node[draw,circle,minimum size=20pt,text width=4ex,text centered] (q501)	 at (-0,-2.5)	{$q_5$ \\ $(0,1)$};

		\path[->] (-0.75,0) edge (q000);
		
		\path[->] (q000) edge node[below] {$\Half/\Half$} (q001);
		\path[->] (q001) edge node {$\Half/\Half$} (q011);
		\path[->] (q011) edge node {$\Half/\Half$} (q012);
		\path[->] (q012) edge node {$\Half/\Half$} (q022);
		\path[->] (q022) edge node {$\Half/\Half$} (q023);
		\path[->] (q023) edge node {$\Half/\Half$} (q500);

		\path[->] (q200) edge node {$\Half/\Half$} (q201);
		\path[->] (q201) edge node {$\Half/\Half$} (q211);
		\path[->] (q211) edge node {$\Half/\Half$} (q212);
		\path[->] (q212) edge node {$\Half/\Half$} (q500);

		\path[->] (q500) edge[bend left=10] node {$\Half/\Half$} (q501);
		\path[->] (q501) edge[bend left=10] node {$\Half/\Half$} (q500);

		\path[->] (q000) edge[loop above] node {$i/o_1$} (q000);
		\path[->] (q001) edge[bend right=15] node[above] {$i/o_1$} (q000);
		\path[->] (q011) edge node[below right] {$i/o_2$} (q200);
		\path[->] (q012) edge node {$i/o_2$} (q200);
		\path[->] (q022) edge node[below right] {$i/o_2$} (q200);
		\path[->] (q023) edge[out=180,in=-120,looseness=1.75] node[near start] {$i/o_1$} (q000);
		\path[->] (q200) edge[loop above] node {$i/o_2$} (q200);
		\path[->] (q201) edge[bend left=45] node[above,near start] {$i/o_2$} (q200);
		\path[->] (q211) edge[bend left=55] node[above] {$i/o_2$} (q200);
		\path[->] (q212) edge node[near start,above right] {$i/o_1$} (q000);
		\path[->] (q500) edge node {$i/o_1$} (q000);
		\path[->] (q501) edge node {$i/o_1$} (q000);
		
		\pgfresetboundingbox
		\path (-1.25,-4) rectangle (8.5,1.25);
	\end{tikzpicture}
	\caption{$\Half$-abstraction of the TFSM in Figure~\ref{fig:fsm2tfsm}.}
  \label{fig:fsm2tfsm2fsm}
 \end{figure}
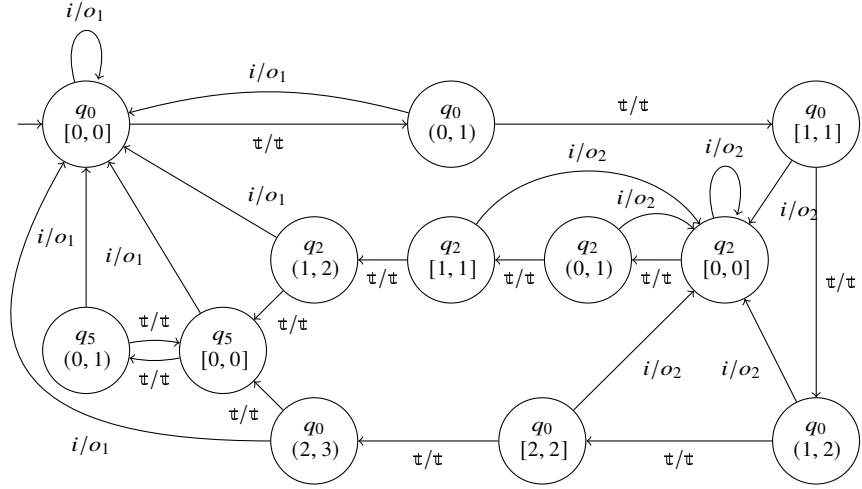

By applying the equivalence checking methodology presented in Section~\ref{sec:timedfsms}, we can prove that the TFSM of Figure~\ref{fig:fsm2tfsm} is indeed equivalent to the TFSM of Figure~\ref{fig:tfsm-all-ex}. 
Figure~\ref{fig:fsm2tfsm2fsm} shows the $\Half$-abstraction of the TFSM of Figure~\ref{fig:fsm2tfsm}, which is equivalent to the FSM of Figure~\ref{fig:tfsm-all-ab} (by standard FSM state-minimization of the FSM in Figure~\ref{fig:fsm2tfsm2fsm}, we get a reduced FSM isomorphic to the one in Figure~\ref{fig:tfsm-all-ab}). 
This is consistent with the fact that the FSM of Figure~\ref{fig:tfsm-all-ab} is the $\Half$-abstraction of the TFSM of Figure~\ref{fig:tfsm-all-ex}.


\section{Intersection of TFSMs}
\label{sec:intersection-new}

In this section we apply the previous transformations to perform
the intersection of TFSMs.
In general, TFSMs can be composed to build complex systems out of simpler 
components.
Several composition operators exist for untimed FSMs, the most relevant ones 
being the intersection operator, the serial composition, and synchronous and asynchronous parallel composition (see~\cite{langeq-procieee2015}). 
Parallel composition of TA was discussed in~\cite{sifakis-stacs96}.
Preliminary work on parallel composition of TFSMs with timed guards and output
delays can be found in~\cite{DBLP:conf/qsic/KondratyevaKCY13}, and
on parallel composition of TFSMs with timeouts and output delays
in~\cite{gromov-ewdts2016}. 
When extending compositions to Timed FSMs, one must verify that TFSMs are closed under the type of composition of interest. 
In our setting, this means that the behaviour of the composed system should be represented by a machine with only a single clock. 
Here we focus on the intersection operator for which we show that closure holds.

In the following we show how the transformation from TFSMs to untimed FSMs of Section~\ref{sec:timedfsms} and the transformation from untimed FSMs to TFSMs of Section~\ref{sec:untimed2timed} can be used to implement the intersection of TFSMs. 
Suppose that we have two TFSMs $M_1$ and $M_2$ and that we want to compute the intersection $M_1 \cap M_2$ whose behaviour is the intersection of the behaviours of $M_1$ and $M_2$. We can proceed as follows:

\begin{compactenum}
	\item compute the $\Half$-abstract FSMs $A_{M_1}$ and $A_{M_2}$ as in Definition~\ref{def:abstract-tfsm-all} for, respectively, $M_1$ and $M_2$;
	\item intersect $A_{M_1}$ and $A_{M_2}$ using the standard algorithm for untimed FSMs, obtaining the untimed FSM $C = A_{M_1} \cap A_{M_2}$;
	\item compute the TFSM $T$ that is $\Half$-bisimilar with $C$ using Algorithm~\ref{alg:main-fsm2tfsm}.
	\end{compactenum}

\noindent The following theorem shows that $T$ is equivalent to the intersection of $M_1$ and $M_2$. 

\begin{theorem}\label{th:intersection}
Let $M_1$ and $M_2$ be two deterministic TFSMs, and let $T = \Call{Refine}{A_{M_1} \cap A_{M_2}}$. Then, for every timed input word $v = (i_1,t_1)\dots (i_k, t_k)$ we have that $B_T(v) = w = (o_1,t_1)\dots (o_k,t_k)$ if and only if $B_{M_1}(v)$ and $B_{M_2}(v)$ are defined and such that $B_{M_1}(v) = B_{M_2}(v) = w$.
\end{theorem}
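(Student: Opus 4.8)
The plan is to prove the theorem as a single chain of biconditionals that routes the timed statement through the $\Half$-abstracted FSMs, thereby reducing it to the ordinary (untimed) semantics of deterministic FSM intersection. Write $C = A_{M_1} \cap A_{M_2}$, so that $T = \textsc{Refine}(C)$. The three results doing the work are: Theorem~\ref{th:abstract-Half-bisim} together with Lemma~\ref{lem:Half-bisimilar-then-equiv}, which give $\Half(B_{M_j}(v)) = B_{A_{M_j}}(\Half(v))$ for $j \in \{1,2\}$ and every timed input word $v$; Corollary~\ref{cor:timed-refinement}, which gives $\Half(B_T(v)) = B_C(\Half(v))$; and the defining property of the standard deterministic FSM intersection, namely $B_C(\alpha) = \beta$ if and only if $B_{A_{M_1}}(\alpha) = B_{A_{M_2}}(\alpha) = \beta$ for every untimed word $\alpha$. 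Alongside these I will use one auxiliary observation: on timed words sharing a fixed timestamp sequence, the map $\Half(\cdot)$ is injective.

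Before the chain can be invoked I must check that $C$ is a legitimate input to Algorithm~\ref{alg:main-fsm2tfsm}, i.e., that it is deterministic and time progressive on its reachable fragment. Determinism is immediate, since the intersection of deterministic FSMs is deterministic. For time progressiveness, note first that every state of $A_{M_j}$ reachable from its initial state $(s_0,[0,0])$ has a unique $\Half/\Half$-successor: by Definition~\ref{def:abstract-tfsm-all}, the delay run from $(s_0,[0,0])$ advances through the intervals of $\bbI_N$ and either loops in $(N,\infty)$ (when the timeout is $\infty$) or jumps to the $[0,0]$-interval of the timeout target, never reaching an interval at or above a finite timeout; moreover every input transition carries a label in $I \times O$. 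Time progressiveness is then preserved by the product: a $\Half$ transition exists in $C$ from any reachable $(r_1,r_2)$ because both components emit a $\Half/\Half$ transition, and any $i$-transition of $C$ with $i \in I$ inherits an output in $O$ from its components.

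With these facts in place, the core argument is the following chain, read for a fixed timed input word $v$ with timestamp sequence $t_1,\dots,t_k$ and the candidate output word $w = (o_1,t_1)\dots(o_k,t_k)$ carrying those same timestamps:
\begin{align*}
B_T(v) = w
&\iff \Half(B_T(v)) = \Half(w) \iff B_C(\Half(v)) = \Half(w) \\
&\iff B_{A_{M_j}}(\Half(v)) = \Half(w) \text{ for } j=1,2 \\
&\iff \Half(B_{M_j}(v)) = \Half(w) \text{ for } j=1,2 \\
&\iff B_{M_j}(v) = w \text{ for } j=1,2,
\end{align*}
which is exactly $B_{M_1}(v) = B_{M_2}(v) = w$. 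The outer two equivalences use injectivity of $\Half$ on the fixed-timestamp family; the second uses Corollary~\ref{cor:timed-refinement}; the third uses the intersection property of $C$; the fourth uses Theorem~\ref{th:abstract-Half-bisim} with Lemma~\ref{lem:Half-bisimilar-then-equiv}. Partiality is carried along the same chain: each biconditional also transports ``undefined'' to ``undefined'', so $B_T(v)$ is defined precisely when both $B_{M_1}(v)$ and $B_{M_2}(v)$ are defined and equal.

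The step I expect to require the most care is justifying the injectivity used in the outer equivalences, because $\Half$ is \emph{not} injective on arbitrary timed output words: two occurrences of the same symbol at different instants inside one open unit interval have identical abstractions. The resolution rests on the instantaneity of outputs assumed throughout the paper: for a deterministic machine the output word $B_\bullet(v)$ carries exactly the timestamps $t_1,\dots,t_k$ of $v$, so every output word appearing in the chain, as well as the word $w$ of the statement, lies in the single family with that fixed timestamp sequence. On this family the only information $\Half(\cdot)$ must preserve is the sequence of output symbols, which appears verbatim as the non-$\Half$ letters of the abstraction (recall $\Half \notin O$); hence $w$ is uniquely recoverable from $\Half(w)$ once $v$ is fixed, and the outer equivalences are valid.
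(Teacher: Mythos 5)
Your proof is correct and follows essentially the same route as the paper's: abstract both machines, pass the statement through the untimed intersection via Corollary~\ref{cor:timed-refinement}, Theorem~\ref{th:abstract-Half-bisim} and Lemma~\ref{lem:Half-bisimilar-then-equiv}, and translate back. The only differences are presentational---you arrange the argument as a single chain of biconditionals and explicitly justify two points the paper treats as immediate, namely the time progressiveness of the product and the recoverability of $w$ from $\Half(w)$ once the timestamp sequence of $v$ is fixed.
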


\begin{proof}
Let $M_1$ and $M_2$ be two deterministic TFSMs, and let $A_{M_1}$ and $A_{M_2}$ be their respective $\Half$-abstractions. By Definition~\ref{def:abstract-tfsm-all} we have that $A_{M_1}$ and $A_{M_2}$ are deterministic and time progressive. Hence, the intersection $A_{M_1} \cap A_{M_2}$ is also deterministic and time progressive and  Algorithm~\ref{alg:main-fsm2tfsm} can be applied to obtain the TFSM $T$.

To prove the direct implication, let $v = (i_1,t_1)\dots (i_k, t_k)$ be an input timed word and suppose that $B_T(v) = w$ for some timed output word $w = (o_1,t_1)\dots (o_k,t_k)$. Since $T = \Call{Refine}{A_{M_1} \cap A_{M_2}}$, by Corollary~\ref{cor:timed-refinement} we have that $T$ refines $A_{M_1} \cap A_{M_2}$. Hence, by Definition~\ref{def:Half-refinement-FSM} we have that $B_{A_{M_1} \cap A_{M_2}}(\Half(v)) = \Half(B_T(v)) = \Half(w)$. Since $A_{M_1} \cap A_{M_2}$ is the intersection of $A_{M_1}$ and $A_{M_2}$, we have that $B_{A_{M_1}}(\Half(v)) =  B_{A_{M_2}}(\Half(v)) = \Half(w)$.
Since $A_{M_1}$ and $A_{M_2}$ are the $\Half$-abstraction of $M_1$ and $M_2$, by Theorem~\ref{th:abstract-Half-bisim} and Lemma~\ref{lem:Half-bisimilar-then-equiv} we have that $\Half(w) = B_{A_{M_1}}(\Half(v))) = \Half(B_{M_1}(v))$ and $\Half(w) = B_{A_{M_2}}(\Half(v))) = \Half(B_{M_2}(v))$. This proves that $B_{M_1}(v)$ and $B_{M_2}(v)$ are defined and such that $B_{M_1}(v) = B_{M_2}(v) = w$.

To prove the opposite implication, let $v = (i_1,t_1)\dots (i_k, t_k)$ be an input timed word and suppose that $B_{M_1}(v)$ and $B_{M_2}(v)$ are defined and such that $B_{M_1}(v) = B_{M_2}(v) = w$ for some timed output word $w = (o_1,t_1)\dots (o_k,t_k)$. Since $A_{M_1}$ and $A_{M_2}$ are the $\Half$-abstraction of $M_1$ and $M_2$, by Theorem~\ref{th:abstract-Half-bisim} and Lemma~\ref{lem:Half-bisimilar-then-equiv} we have that $B_{A_{M_1}}(\Half(v))) = \Half(B_{M_1}(v)) = \Half(w)$ and $B_{A_{M_2}}(\Half(v))) = \Half(B_{M_2}(v)) = \Half(w)$.
Hence, the intersection $A_{M_1} \cap A_{M_2}$ is such that $B_{A_{M_1} \cap A_{M_2}}(\Half(v)) = B_{A_{M_1}}(\Half(v)) =  B_{A_{M_2}}(\Half(v)) = \Half(w)$.
Since $T = \Call{Refine}{A_{M_1} \cap A_{M_2}}$, by Corollary~\ref{cor:timed-refinement} and Definition~\ref{def:Half-refinement-FSM} we have that $\Half(B_T(v)) = B_{A_{M_1} \cap A_{M_2}}(\Half(v)) =  \Half(w)$. Hence, we have proved that $B_T(v) = w$.
\end{proof}

As an example, consider the TFSMs $M_1$ and $M_2$ of Figure~\ref{fig:intersection_start}, and suppose we want to compute the \emph{intersection} $M_1 \cap M_2$.
Following the above procedure, the first step is to obtain the $\Half$-abstract FSMs $A_{M_1}$ and $A_{M_2}$ in Figure~\ref{fig:intersection_step1}. Then, by applying the standard constructions for intersection and minimization of untimed FSMs, we obtain the machine $C$ depicted in Figure~\ref{fig:intersection_step2} and finally, using Algorithm~\ref{alg:main-fsm2tfsm}, the TFSM $T = \Call{Refine}{A_{M_1} \cap A_{M_2}}$ of Figure~\ref{fig:intersection_final}. 
It is worth pointing out that the intersection of two complete and deterministic TFSMs is still a deterministic machine, but it may be partial. This is indeed the case of our example: for instance, when the TFSM in Figure~\ref{fig:intersection_final} is in state $0$ it can react to the input $i$ only when the clock is in the intervals $[0,0]$ or $(1,2)$. No behaviour is specified when the clock is inside the interval $(0,1]$ and $[2,3)$.
In state $1$ and $13$ no behaviour is specified when the clock has an integer value smaller than the timeout ($0, 1, 2$ and $3$ for state $1$, $0$ for state $13$).

\begin{figure}[tbp]
\ \hfill
	\subfigure{
	\begin{tikzpicture}[auto]
		\node	(M1) at (-1.25,0) {$M_1$};
		\node[draw,circle] (A)	 at (0,0)	{$A$};
		\node[draw,circle]	(B) at (3,0)	{$B$};
		\node[draw,circle]	(C) at (1.5,-2.5)	{$C$};
		
		\draw[->] (-0.75,0) -- (A);
		\path[->] (A) edge[loop above] node[above]        {$[1,2):i/o_2$} ();
		\path[->] (A) edge[bend left=15]  node[above right] {$t=2$} (C);
		\path[->] (A) edge					  node[above] {$[0,1):i/o_1$} (B);
		\path[->] (B) edge[loop above] node        {$[0,\infty):i/o_2$} ();
		\path[->] (C) edge[loop right] node        {$(0,1):i/o_1$} ();
		\path[->] (C) edge[bend left=15] node[below left,align=center]        {$[0,0]:i/o_2$ \\ t=1} (A);
	\end{tikzpicture}}
\hfill
	\subfigure{
	\begin{tikzpicture}
		\node	(M2) at (-1.5,0) {$M_2$};
		\node[draw,circle] (a)	 at (0,0)	{$a$};

		\draw[->] (-0.75,0) -- (q0);
		\path[->] (a) edge[loop above] node        {$[0,0]: i/o_1$} ();
		\path[->] (a) edge[loop right] node        {$(0,1): i/o_2$} ();
		\path[->] (a) edge[loop below] node        {$t=1$} ();
		\path (-2,-2.25)--(3,1.25);
	\end{tikzpicture}}
\hfill\ 
	\caption{TFSMs $M_1$ and $M_2$ to be intersected.}
  \label{fig:intersection_start}
\end{figure}
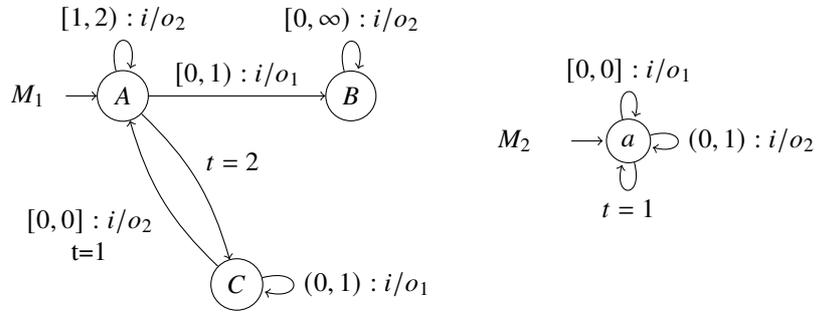

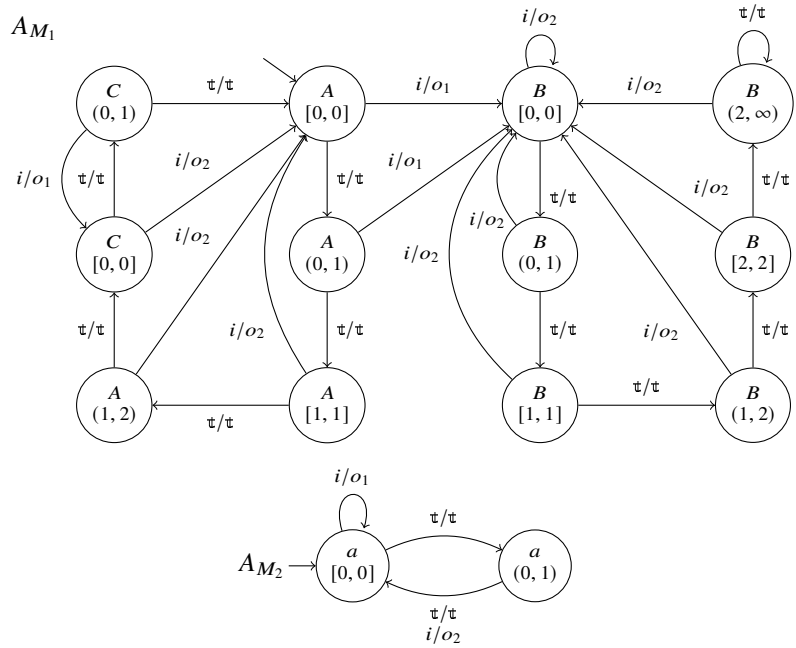
\begin{figure}[tbp]
\centering
	\subfigure{
	\begin{tikzpicture}[auto,xscale=1.4,yscale=1,font=\scriptsize]
		\node	(M1) at (-2.75,1) {\normalsize$A_{M_1}$};
		\node[draw,circle,align=center,inner sep=2pt] (A00)	 at (0,0)	{$A$\\$[0,0]$};
		\node[draw,circle,align=center,inner sep=2pt] (A01)	 at (0,-2)	{$A$\\$(0,1)$};
		\node[draw,circle,align=center,inner sep=2pt] (A11)	 at (0,-4)	{$A$\\$[1,1]$};
		\node[draw,circle,align=center,inner sep=2pt] (A12)	 at (-2,-4)	{$A$\\$(1,2)$};

		\node[draw,circle,align=center,inner sep=2pt] (C00)	 at (-2,-2)	{$C$\\$[0,0]$};
		\node[draw,circle,align=center,inner sep=2pt] (C01)	 at (-2,0)	{$C$\\$(0,1)$};

		\node[draw,circle,align=center,inner sep=2pt] (B00)	 at (2,0)		{$B$\\$[0,0]$};
		\node[draw,circle,align=center,inner sep=2pt] (B01)	 at (2,-2)	{$B$\\$(0,1)$};
		\node[draw,circle,align=center,inner sep=2pt] (B11)	 at (2,-4)	{$B$\\$[1,1]$};
		\node[draw,circle,align=center,inner sep=2pt] (B12)	 at (4,-4)	{$B$\\$(1,2)$};
		\node[draw,circle,align=center,inner sep=2pt] (B22)	 at (4,-2)	{$B$\\$[2,2]$};
		\node[draw,circle,align=center,inner sep=2pt] (B2inf)	 at (4,0)	{$B$\\$(2,\infty)$};

		\draw[->] (-0.6,0.6) -- (A00);
		\path[->] (A00) edge node       {$\Half/\Half$} (A01);
		\path[->] (A01) edge node       {$\Half/\Half$} (A11);
		\path[->] (A11) edge node       {$\Half/\Half$} (A12);
		\path[->] (A12) edge node       {$\Half/\Half$} (C00);

		\path[->] (C00) edge node       {$\Half/\Half$} (C01);
		\path[->] (C01) edge node       {$\Half/\Half$} (A00);

		\path[->] (B00) edge node[near end]       {$\Half/\Half$} (B01);
		\path[->] (B01) edge node       {$\Half/\Half$} (B11);
		\path[->] (B11) edge node       {$\Half/\Half$} (B12);
		\path[->] (B12) edge node[right]       {$\Half/\Half$} (B22);
		\path[->] (B22) edge node[right]       {$\Half/\Half$} (B2inf);
		\path[->] (B2inf) edge[loop above] node       {$\Half/\Half$} (B2inf);

		\path[->] (A00) edge node       {$i/o_1$} (B00);
		\path[->] (A01) edge node       {$i/o_1$} (B00);
		\path[->] (A11) edge[bend left=25] node[near start] {$i/o_2$} (A00);
		\path[->] (A12) edge node {$i/o_2$} (A00);

		\path[->] (C00) edge node       {$i/o_2$} (A00);
		\path[->] (C01) edge[bend right=37.5] node[left]       {$i/o_1$} (C00);

		\path[->] (B00) edge[loop above] node       {$i/o_2$} (B00);
		\path[->] (B01) edge[bend left] node[near start]       {$i/o_2\!\!\!\!$} (B00);
		\path[->] (B11) edge[bend left=37.5] node       {$i/o_2$} (B00);
		\path[->] (B12) edge node[near start]       {$i/o_2$} (B00);
		\path[->] (B22) edge node[near start, above right]       {$i/o_2$} (B00);
		\path[->] (B2inf) edge node[above]       {$i/o_2$} (B00);

	\end{tikzpicture}}
%
	\subfigure{
	\begin{tikzpicture}[auto,xscale=1.2,yscale=1,font=\scriptsize]
		\node	(M2) at (-1,0) {\normalsize$A_{M_2}$};
		\node[draw,circle,align=center,inner sep=2pt] (a00)	 at (0,0)	{$a$\\$[0,0]$};
		\node[draw,circle,align=center,inner sep=2pt] (a01)	 at (2,0)	{$a$\\$(0,1)$};

		\draw[->] (-0.7,0) -- (a00);
		\path[->] (a00) edge[loop above] node        {$i/o_1$} ();
		\path[->] (a00) edge[bend left] node        {$\Half/\Half$} (a01);
		\path[->] (a01) edge[bend left] node[align=center]        {$\Half/\Half$ \\ $i/o_2$} (a00);
	\end{tikzpicture}}

	\caption{Untimed abstractions of $M_1$ and $M_2$.}
  \label{fig:intersection_step1}
\end{figure}

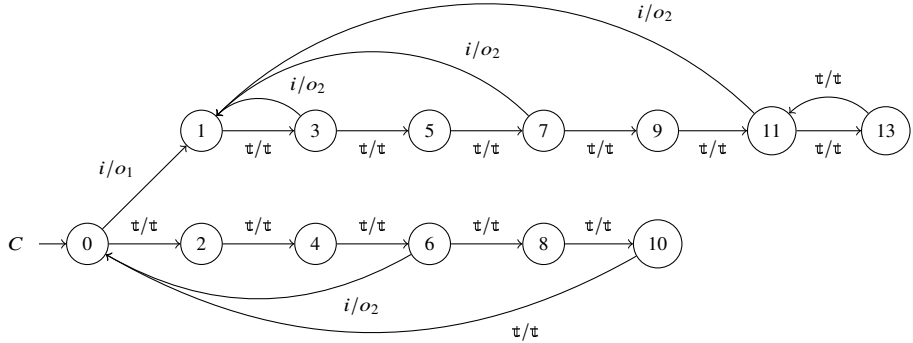
\begin{figure}[tbp]
\centering
	\begin{tikzpicture}[auto,scale=0.75,font=\scriptsize]
		\node	(M1) at (-1.25,0) {$C$};
		\node[draw,circle] (0)	 at (0,0)	{$0$}; 
		\node[draw,circle]	(1) at (2,2)	{$1$}; 
		\node[draw,circle]	(3) at (4,2)	{$3$}; 
		\node[draw,circle]	(5) at (6,2)	{$5$}; 
		\node[draw,circle]	(7) at (8,2)	{$7$}; 
		\node[draw,circle]	(9) at (10,2)	{$9$}; 
		\node[draw,circle]	(11) at (12,2)	{$11$}; 
		\node[draw,circle]	(13) at (14,2)	{$13$}; 

		\node[draw,circle]	(2) at (2,0)	{$2$}; 
		\node[draw,circle]	(4) at (4,0)	{$4$}; 
		\node[draw,circle]	(6) at (6,0)	{$6$}; 
		\node[draw,circle]	(8) at (8,0)	{$8$}; 
		\node[draw,circle]	(10) at (10,0)	{$10$}; 
		
		\draw[->] (-0.85,0) -- (0);
		\path[->] (1) edge  node[below] {$\Half/\Half$} (3);
		\path[->] (3) edge  node[below] {$\Half/\Half$} (5);
		\path[->] (5) edge  node[below] {$\Half/\Half$} (7);
		\path[->] (7) edge  node[below] {$\Half/\Half$} (9);
		\path[->] (9) edge  node[below] {$\Half/\Half$} (11);
		\path[->] (11) edge  node[below] {$\Half/\Half$} (13);
		\path[->] (13) edge[bend right=45]  node[above] {$\Half/\Half$} (11);

		\path[->] (0) edge  node {$\Half/\Half$} (2);
		\path[->] (2) edge  node {$\Half/\Half$} (4);
		\path[->] (4) edge  node {$\Half/\Half$} (6);
		\path[->] (6) edge  node {$\Half/\Half$} (8);
		\path[->] (8) edge  node {$\Half/\Half$} (10);
		\path[->] (10) edge[bend left]  node[near start] {$\Half/\Half$} (0);

		\path[->] (0) edge  node {$i/o_1$} (1);
		\path[->] (3) edge[bend right=45]  node[near start,above right] {$i/o_2$} (1);
		\path[->] (7) edge[bend right=45]  node[near start,above right] {$i/o_2$} (1);
		\path[->] (11) edge[bend right=45]  node[near start,above right] {$i/o_2$} (1);
		\path[->] (6) edge[bend left]  node[near start] {$i/o_2$} (0);
	\end{tikzpicture}
	\caption{The intersection of $A_{M_1}$ and $A_{M_2}$.}
  \label{fig:intersection_step2}
\end{figure}

\begin{figure}[tbp]
\centering
	\begin{tikzpicture}[auto,font=\scriptsize]
		\node	(M1) at (-1.75,0) {$M_1 \cap M_2$};
		\node[draw,circle] (0)	 at (0,0)	{$0$};
		\node[draw,circle]	(1) at (2.5,0)	{$1$};
		\node[draw,circle]	(13) at (5,0)	{$13$};
		
		\draw[->] (-0.75,0) -- (0);

		\path[->] (0) edge  node {$[0,0]: i/o_1$} (1);
		\path[->] (0) edge[loop above]  node {$(1,2):i/o_2$} (0);
		\path[->] (0) edge[loop below]  node {$t=3$} (0);

		\path[->] (1) edge[loop above]  node[align=center] 
				{$(0,1):i/o_2$ \\ $(1,2):i/o_2$ \\ $(2,3):i/o_2$ \\ $(3,4):i/o_2$} (1);
		\path[->] (1) edge node {$t=4$} (13);
		\path[->] (13) edge[bend left] node {$(0,1):i/o_2$} (1);
		\path[->] (13) edge[loop above] node {$t=1$} (13);

	\end{tikzpicture}
	\caption{The TFSM for ${M_1}\cap{M_2}$.}
  \label{fig:intersection_final}
\end{figure}
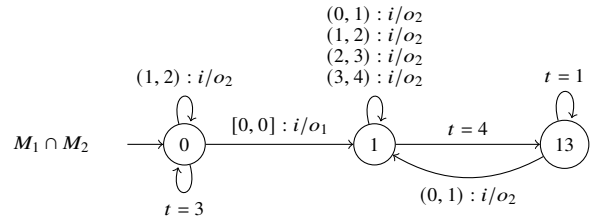

\medskip

\section{Timed FSMs and Timed Automata}
\label{sec:tfsm-ta}

In this section, we compare TFSMs with Timed Automata (TA), and survey
the known results on the expressivity and computability of various classes
of TA, according to their computational resources.
The landscape of finite automata augmented with time is much more 
complex than in the case of untimed ones, where both language recognizers
(FA) and producers (FSMs) share the fact that there is an underlying common
model with corresponds to regular languages (FSMs transform regular input 
languages into regular output languages).
TA are the most common formalism obtained by adding timing constraints 
(as clocks) to finite-state automata ~\cite{Alur-tcs1994}, defining timed 
regular recognizers.
TA are a more expressive model than TFSMs because they allow
multiple clocks, invariants as conditions on clocks associated to a location,
guards as conditions on clocks associated to a transition,
resets by which a clock may be reset to $0$ or may be kept unchanged,
and states which are products of a location and clock valuations.
Excellent surveys about the classes of TA proposed in the literature can
be found in~\cite{Fontana:2014:MTA:2578702.2518102,ta-csr2013-rudie}.

TFSMs can be transformed into TA with $\varepsilon$-transitions
(called also in the literature silent transitions or internal transitions 
or non-observable transitions) by the following transformation:
\begin{itemize} 
\item there is one location of the TA for every state of the TFSM;
\item given the input and output alphabets $I$ and $O$ of the TFSM,
the alphabet of the TA is given by $I \times O$
\item as in the TFSM, the TA has a single clock, reset to zero 
at every transition;
\item intervals on transitions are replaced with guards;
\item timeouts of the TFSM are replaced by invariants and $\varepsilon$-
transitions.
\end{itemize}
An example of such transformation is shown in Fig.~\ref{fig:TFSM-to-eTA},
where on the left there is a TFSM and on the right the corresponding TA.

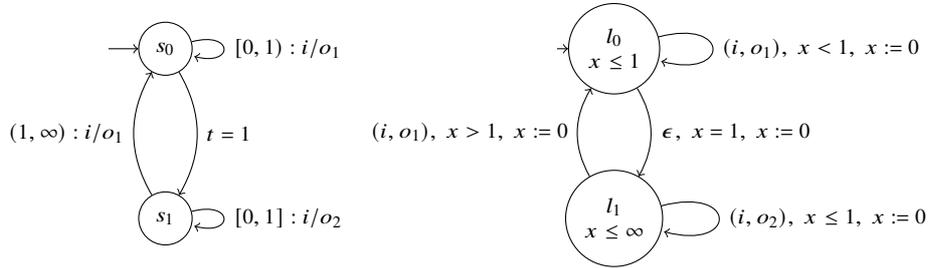
\begin{figure}[tbp]
                \subfigure{
		\begin{tikzpicture}[font=\footnotesize,scale=0.75]
			\path (-3,-4)--(3,1);
			\node[draw,circle,minimum size=20pt] (q0)	 at (0,0)	{$s_0$};
			\node[draw,circle,minimum size=20pt]	(q1) at (0,-3)	{$s_1$};
					
			\draw[->] (-1,0) -- (q0);
			\path[->] (q0) edge[loop right] node        {$[0,1):i/o_1$} ();
			\path[->] (q0) edge[bend left]  node[right] {$t = 1$} (q1);
			\path[->] (q1) edge[bend left]  node[left] {$(1,\infty):i/o_1$} (q0);
			\path[->] (q1) edge[loop right] node        {$[0,1]:i/o_2$} ();
		\end{tikzpicture}}
\hfill
                \subfigure{
		\begin{tikzpicture}[font=\footnotesize,scale=0.75]
			\path (-3,-4)--(3,1);
			\node[draw,circle,minimum size=20pt,align=center] (q0)	 at (0,0)	{$l_0$\\ $x \leq 1$};
			\node[draw,circle,minimum size=20pt,align=center]	(q1) at (0,-3)	{$l_1$ \\ $x \leq \infty$};
					
			\draw[->] (-1,0) -- (q0);
			\path[->] (q0) edge[loop right] node        {$(i,o_1),\ x < 1,\ x:= 0$} ();
			\path[->] (q0) edge[bend left]  node[right] {$\epsilon,\ x = 1,\ x := 0$} (q1);
			\path[->] (q1) edge[bend left]  node[left] {$(i,o_1),\ x > 1,\ x:= 0$} (q0);
			\path[->] (q1) edge[loop right] node        {$(i,o_2),\ x \leq 1,\ x:= 0$} ();
		\end{tikzpicture}}
\caption{Transformation from TFSM (on the left) to $\epsilon$-timed automaton (on the right).}
\label{fig:TFSM-to-eTA}
\end{figure}

This reduction is not necessarily practical, since decision problems are 
in general undecidable for timed automata, even for restricted versions 
of them. In the following we mention some of these relevant results. 
For a classic survey on decision problems for timed automata, see~\cite{alur-decTA2004},
 where the following results can be found:
\begin{enumerate}
\item
TA are closed under union, intersection, projection, but not under
complementation.
\item
The language emptiness problems is PSPACE-complete (a by-product
of reachability analysis obtained by means of the region construction).
\item
The universality. inclusion and equivalence problems for TA
are undecidable.
\item 
Deterministic TA are closed under union. intersection and complementation,
but not under projection. The language emptiness, universality, inclusion
and equivalence problems for deterministic TA are PSPACE-complete.
\end{enumerate}
Further results are proved in~\cite{DBLP:journals/eatcs/Finkel05} and~\cite{DBLP:conf/formats/Finkel06},
e.g., that one cannot decide whether a given timed automaton is 
determinizable or whether the complement of a timed regular
language is timed regular. 

One may wonder whether the complexity goes down, if we reduce the resources 
of the timed automaton. The answer is sometimes yes, but only in very 
restricted cases.
%
In~\cite{Ouaknine:2004:LIP:1018438.1021842,ouaknine-alp2005} it is shown that
the problem of checking language inclusion $L(A) \subseteq L(B)$ 
of TA $A$ and $B$ is decidable if $B$ has no $\epsilon$-transitions.
and either $B$ has only one clock, or the guards of $B$ use only the constant 0.
These two cases are essentially the only decidable instances
of language inclusion, in terms of restricting the various resources of
timed automata. 
%
Similar conclusions for the universality problem (does a given TA accept 
all timed words) are drawn in~\cite{AbdullaDOQW08}: the one-clock universality
problem is undecidable for TA over infinite words, and decidable for TA
over finite words, but undecidable for both if $\epsilon$-transitions are allowed.
%
%
Model checking and reachability of timed automata with one or two clocks 
are discussed in~\cite{Laroussinie:2004:Concur,2015:RTT:2795670.2795917}.

It is a fact that reducing resources, like the number of clocks, may
simplify some problems, but allowing $\epsilon$-transitions, even with
few resources, makes the problems as hard as in the general case.
%
%
%
%
A score of papers~\cite{Berard:1996:PNA:646511.759229,opac-b1041550,Diekert:1997:RET:646512.695332,Berard:1998:CEP:2379236.2379238}
investigated the expressiveness of timed automata augmented with 
$\epsilon$-transitions, and proved the following results:
\begin{enumerate}
\item
The class of timed languages recognized by timed automata with 
$\epsilon$-transitions is more robust and expressive than those without them.
\item
A timed automaton with $\epsilon$-transitions that do not reset clocks
can be transformed into an equivalent one without $\epsilon$-transitions
(equivalent means with the same timed language).
\item
A (non-Zenonian) timed automaton such that no $\epsilon$-transitions that 
reset clocks lie on a direct cycle can be transformed into an equivalent 
one without $\epsilon$-transitions.
\item
There is a timed automaton, with an $\epsilon$-transition which resets
clocks on a cycle, which is not equivalent to any timed automaton without
$\epsilon$-transitions.
\end{enumerate}
More undecidability questions for timed automata with $\epsilon$-transitions
were answered in~\cite{Bouyer:2009:URT:2362701.2362702}, e.g.:
given a timed automaton with $\epsilon$-transitions, it is undecidable 
to determine if there exists an equivalent timed 
automaton without $\epsilon$-transitions.
%

%
The problem of removing $\epsilon$-transitions got a new twist 
in~\cite{DBLP:conf/formats/DimaL09}, where it was shown that if one allows 
periodic clock constraints and periodic resets (updates), then we can remove 
$\epsilon$-transitions from a timed automaton; moreover, the authors
proved that periodic updates are necessary, defining a language that cannot
be accepted by any timed automaton with periodic constraints and transitions
which reset clocks to 0 and no $\epsilon$-transitions.
In conclusion, timed automata are a rich model with and without 
$\epsilon$-transitions, therefore in general their decision problems
are undecidable or very difficult also for restricted versions, even
more so if $\epsilon$-transitions are admitted.

An interesting restricted model are Real-Time Automata (RTA)
introduced by C. Dima~\cite{DBLP:journals/jalc/Dima01} in 2001:
they are finite automata with a labeling function (from states to an alphabet)
and a time labeling function (from states to rational intervals) which 
together define the label of a state.
RTA work over signals that are functions with finitely many discontinuities
from non-negative rational intervals $[0,e)$ (with $e > 0$) to an alphabet,
so that the domain of a signal is partitioned into finitely many intervals 
where the signal is constant.
A run is associated with a signal iff there is a sequence of partitioning points
consistent with the state labels (stuttering, i.e., repetition of signal values
is allowed); signals associated with an accepting
run are the timed language associated to an RTA.
The author states in~\cite{DBLP:journals/jalc/Dima01} that RTA can be 
viewed as a class of state-labeled timed automata over timed words 
(instead than signals) with a single clock which is reset at every transition 
(stuttering being reduced to $\epsilon$-transitions).
Moreover, it is claimed that RTA are the largest 
timed extension of finite automata whose emptiness and universality problems 
are decidable, $\epsilon$-transitions can be removed, 
there is a determinization construction, are closed under complementation, 
and a version of Kleene theorem holds.

More complex classes of timed automata have been studied, in which
the interplay between variants of the basic constituents defining them
yields interesting combinations of expressivity and computability.

Event-Clock Automata~\cite{Alur-tcs1999} (ECTA) are a determinizable robust 
subclass of timed automata.
Event-clock automata are characterized with respect to timed automata
by the fact that explicit resets of clocks are replaced by a predefined
association with the input symbols such that for each input $x \in \Sigma$:
a global recorder clock records the time elapsed since the last occurrence
of $x$ and a global predictor clock measures the time required for the next
occurrence of $x$ (clock valuations are determined only by the input timed
words).
They are closed under Boolean operations (TA are not closed under complement)
and language inclusion is PSPACE-complete for them (it is undecidable for TA).
It is mentioned in~\cite{DBLP:journals/jalc/Dima01} that RTA are incomparable
with ECTA, which are the largest known determinizable subclass of timed 
automata. since RTA may accept languages that ECTA cannot.

Timed Automata with Non-Instantaneous Actions~\cite{Barbuti:2001:TAN:1220035.1220037} are such that an action can take some time to be completed;
they are more expressive that timed automata and less expressive
than timed automata with $\epsilon$-transitions.
%
%
Updatable Timed Automata were introduced in~\cite{Bouyer-tcs2004}
as an extension to update the clocks in a more elaborate way than simply 
resetting them to 0; their emptiness problem is undecidable, but
there are interesting decidable subclasses. 
Any updatable automaton belonging to some decidable subclass can be effectively transformed into an equivalent timed automaton without updates, but with $\epsilon$-transitions. 

%
%
%
%
%

%
A complete taxonomy of timed automata is presented in~\cite{Fontana:2014:MTA:2578702.2518102}, and issues of undecidability are discussed in depth in~\cite{Miller:2000:DCR:646880.710453}.
%
Properties of timed automata are contrasted in~\cite{journals/ita/BrihayeBR10} 
with those of a special class of hybrid automata with severe restrictions 
on the discrete transitions: hybrid systems with strong resets, which
have the property that all the continuous 
variables are non-deterministically reset after each discrete transition,
(differently from timed automata, where flow rates are constant, and it is not
compulsory to reset variables on each discrete transition).
%
Connections between timed automata and timed discrete-event models
are explored in~\cite{stavros-fmats-2013}.
%

The trade-off in preferring TA vs.TFSMs depends also on the specific problem 
at hand.
For instance, TA and TFSMs are used when deriving tests for discrete event 
systems.
However, methods for direct derivation of complete test suites over TA 
return infinite test suites~\cite{tretmns-phdthesis}.
Therefore, to derive complete finite test suites with a guaranteed fault 
coverage, a TA is usually converted to an FSM and FSM-based test derivation 
is then used (see~\cite{Springintveld2001,En-Nouaary:2002:TWT:630831.631295}).
Therefore, TFSMs may be preferred over TA and other models when the derivation
of complete tests is required (as done in~\cite{ElFakih-scp2014} for TFSMs
with timed guards), even though the test suites so obtained are rather long.
We mention also that the FSM abstraction introduced in this paper was used
in~\cite{ElFakih-ictss2018}, to derive complete finite test suites for TFSMs
with both timeouts and timed guards.
Since FSMs are used for testing, state distinguishability, and state 
identification problems of hardware and software designs
(see~\cite{testing_fsms-lee_yannakakis-1996,testing_softare_fsms-chow2002,kohavi-book2009}), 
TFSMs may be applied to the timed versions of these problems, instead than
using TA.

\section{Conclusions}
\label{sec:conclusions}

We investigated deterministic TFSMs with a single clock, 
with both timed guards and timeouts.
We showed that the behaviours of the timed FSMs are equivalent if and only 
if the behaviours of the companion untimed FSMs obtained by time-abstracting
bisimulations are equivalent, so that they exhibit a good trade-off between
expressive power and ease of analysis.

Then we defined and proved the correctness of the backward construction from
Untimed FSMs to TFSMs. The construction starts from any deterministic FSM 
recognizing a subset of the language 
$\left(\left(\sfrac{\Half}{\Half}\right)^* \sfrac{I}{O}\right)^*\left(\sfrac{\Half}{\Half}\right)^*$
and builds a deterministic TFSM that recognizes the corresponding timed 
language. 
Using the two constructions we showed how to intersect two deterministic TFSMs,
first by transforming them into untimed FSMs, then applying the standard 
intersection algorithm for untimed FSMs, and then transforming back into
a deterministic TFSM.

Future work includes studying more general composition operators
to define and solve equations over deterministic TFSMs~\cite{villa-ucp-book},
and addressing the previous problems for TFSMs with output delays~\cite{Merayo2008} 
and nondeterministic TFSMs.


\subsection*{Acknowledgements}
Davide Bresolin and Tiziano Villa acknowledge partial support from
the project INdAM, GNCS 2020 (Strategic Reasoning and Automated Synthesis of Multi-Agent Systems) funded by MIUR (Italian Ministry of Education, University and Research).
Tiziano Villa was partially supported by MIUR, ``Project Italian Outstanding 
Departments, 2018-2022''.
Nina Yevtushenko was partly supported by the Ministry of Science and Higher Education of the Russian Federation (grant number 075-15-2020-788).


\bibliographystyle{abbrv}     
\bibliography{tfsms}

\end{document}